\newtheorem{theorem}{Proposition}
\newtheorem{lemma}{Lemma}
\algrenewcommand{\algorithmiccomment}[1]{
  \hfill$\blacktriangleright$ #1}
\newcommand{\bigcell}[2]{\begin{
  tabular}{@{}#1@{}}#2\end{tabular}}
\DeclareMathOperator{\rank}{rank}
\DeclareMathOperator{\vecmat}{vec}
\newcommand{\q}{q}
\newcommand{\hsigma}{\hat{\sigma}}
\newcommand{\ha}{\hat{a}}
\newcommand{\hx}{\hat{x}}
\newcommand{\hX}{\widehat{X}}
\newcommand{\bzero}{\boldsymbol 0}
\newcommand{\bone}{\boldsymbol 1}
\newcommand{\bha}{\hat{\boldsymbol{a}}}
\newcommand{\bof}{\boldsymbol f}
\newcommand{\bol}{\boldsymbol g}
\newcommand{\bu}{\boldsymbol u}
\newcommand{\btu}{\boldsymbol{\tilde{u}}}
\newcommand{\bv}{\boldsymbol v}
\newcommand{\btv}{\boldsymbol{\tilde{v}}}
\newcommand{\bx}{\boldsymbol x}
\newcommand{\bA}{\boldsymbol A}
\newcommand{\bD}{\boldsymbol D}
\newcommand{\bR}{\boldsymbol R}
\newcommand{\bS}{\boldsymbol S}
\newcommand{\bU}{\boldsymbol U}
\newcommand{\btU}{\boldsymbol{\widetilde{U}}}
\newcommand{\bhU}{\boldsymbol{\widehat{U}}}
\newcommand{\bV}{\boldsymbol V}
\newcommand{\btV}{\boldsymbol{\widetilde{V}}}
\newcommand{\bhV}{\boldsymbol{\widehat{V}}}
\newcommand{\bW}{\boldsymbol W}
\newcommand{\btW}{\boldsymbol{\widetilde{W}}}
\newcommand{\bX}{\boldsymbol X}
\newcommand{\bhX}{\boldsymbol{\widehat{X}}}
\newcommand{\btheta}{\bm \theta}
\newcommand{\bhmu}{\boldsymbol{\hat{\mu}}}
\newcolumntype{M}[1]{>{\centering\arraybackslash}m{#1}}
\definecolor{orange1}{RGB}{255,128,0}
\definecolor{purple2}{RGB}{102,0,204}
\definecolor{blue}{RGB}{0,0,255}
\definecolor{red}{RGB}{255,0,0}
\begin{document}

\def\spacingset#1{\renewcommand{\baselinestretch}
{#1}\small\normalsize} \spacingset{1}


\phantom{abc}

\vspace{5mm}

\begin{center}
\LARGE{\bf Multivariate Singular Spectrum Analysis\\ by  
  Robust Diagonalwise Low-Rank Approximation}\\
\vspace{10mm}
\large{Fabio Centofanti\footnote[1]{
        Department of Industrial Engineering, University
				of Naples Federico II, Naples, Italy,
				\texttt{fabio.centofanti@unina.it}}, 
        Mia Hubert\footnote[2]{
				Section of Statistics and Data Science, Department 
        of Mathematics, KU Leuven, Belgium, \texttt{
				$\{$mia.hubert,peter.rousseeuw$\}$@kuleuven.be}}, 
        Biagio Palumbo$^1$, and Peter J. Rousseeuw$^2$}\\
\vspace{8mm}
September 30, 2023\\
\end{center}
\vspace{3mm}

\bigskip
\begin{center}
{\bf Abstract}
\end{center}
Multivariate Singular Spectrum Analysis (MSSA) is a powerful 
and widely used nonparametric method for multivariate time 
series, which allows the analysis of complex temporal data 
from diverse fields such as finance, healthcare, ecology, 
and engineering.
However, MSSA lacks robustness against outliers because it 
relies on the singular value decomposition, which is very 
sensitive to the presence of anomalous values. MSSA can 
then give biased results and lead to erroneous conclusions.
In this paper a new MSSA method is proposed, named 
\textit{RObust Diagonalwise Estimation of SSA} (RODESSA), 
which is robust against the presence of cellwise and 
casewise outliers. In particular, the decomposition step 
of MSSA is replaced by a new robust low-rank approximation 
of the trajectory matrix that takes its special structure 
into account. 
A fast algorithm is constructed, and it is proved that 
each iteration step decreases the objective function. 
In order to visualize different types of outliers, a new 
graphical display is introduced, called an enhanced time 
series plot.
An extensive Monte Carlo simulation study is performed to 
compare RODESSA with competing approaches in the literature. 
A real data example about temperature analysis in passenger 
railway vehicles demonstrates the practical utility of the 
proposed approach.\\

\vspace{10mm}

\noindent {\it Keywords:} Casewise outliers; Cellwise  
outliers; Iteratively reweighted least squares;\linebreak 
Multivariate time series; Robust statistics. 

\spacingset{1.5}

\newpage
\section{Introduction} \label{sec:intro}

Time series analysis plays a crucial role in 
understanding and predicting the behavior of 
sequential data across a wide range of disciplines. 
It has proven to be an invaluable tool in fields such 
as finance, healthcare, ecology, engineering, and more. 
Singular spectrum analysis (SSA) has emerged as a 
powerful nonparametric tool for extracting valuable 
insights from time-dependent data. A comprehensive 
overview of SSA can be found in the books 
\cite{golyandina2001analysis}, 
\cite{golyandina2013singular}, 
and \cite{golyandina2018singular}.
Numerous examples showcasing the success of SSA can 
be found in the literature. Multivariate singular 
spectrum analysis, referred to as MSSA 
\citep{broomhead1986qualitative}, enables the 
simultaneous analysis and interpretation of multiple 
time series, by exploiting the dependencies between 
variables.

A crucial step of SSA is the low-rank approximation 
of the so-called trajectory matrix, which will be 
described in the next section. The most often used 
tool for this is the singular value decomposition (SVD), 
which is however sensitive to the presence of outliers 
in the data. Several authors have proposed 
outlier-robust SSA methods by replacing the SVD by more 
robust versions. However, none of these low-rank 
approximations took the special diagonal structure of 
the trajectory matrix into account. The main 
contribution of our work is a new robust low-rank 
approximation method tailored to this situation. 

The paper is organized as follows. Section \ref{sec:met} 
briefly surveys existing work and introduces the new 
approach, called \textit{RObust Diagonalwise Estimation 
of SSA} (RODESSA), including its algorithm, 
implementation, and forecasting method. It is proved 
that each step of the algorithm reduces the objective 
function.
Section \ref{sec:out} proposes an enhanced time series 
plot in which two types of outliers are represented by 
colors, in order to assist with outlier detection. 
In Section \ref{sec:sim}, the performance of RODESSA is 
assessed by an extensive Monte Carlo simulation study. 
Section \ref{sec:exa} presents a real data example 
regarding temperature analysis in passenger railway 
vehicles. Section \ref{sec:con} concludes the paper.

\section{Multivariate singular spectrum analysis}
\label{sec:met}
\subsection{Classical multivariate SSA}
\label{sec:mssa}
Consider a $p$-variate time series 
$\mathbb{X}=\left(\mathbb{X}^{(1)}, \ldots, 
\mathbb{X}^{(p)}\right)$, i.e., a collection 
$\{\mathbb{X}^{(j)}=(x_i^{(j)})_{i=1}^N\,,$ 
$ j= 1, \ldots, p\}$ of $p$ time series 
of length $N$. Multivariate SSA then proceeds by the 
following four successive steps.
\vspace{3mm}

\noindent{\bf 1. Embedding.}
In the embedding step, the multivariate time series 
$\mathbb{X}$ is mapped into a big trajectory matrix 
$\bX$. Let $L$ be an integer called 
\textit{window length}, $1<L<N$. 
For each time series $\mathbb{X}^{(j)}$ we then form 
$K_u=N-L+1$ lagged vectors $X_i^{(j)}=\left(x_i^{(j)}, 
\ldots, x_{i+L-1}^{(j)}\right)^{T}$ 
for $1 \leqslant i \leqslant K_u$. 
The trajectory matrix of the multivariate series 
$\mathbb{X}$ is a matrix of size $L \times K$ 
with $K=p K_u$\,, and has the form
\begin{align} \label{eq:stack}
 \mathcal{T}_{\text {MSSA}}(\mathbb{X}) = 
 \bX &= \left[X_1^{(1)}: \ldots: 
 X_{K_u}^{(1)}: \ldots: X_1^{(p)}: \ldots:  
 X_{K_u}^{(p)}\right]=\left[\bX^{(1)}: 
 \ldots: \bX^{(p)}\right]\\
 &=\left[
 \begin{matrix}
 x_1^{(1)} & x_2^{(1)}&x_3^{(1)}&\ldots\\
 x_2^{(1)} & x_3^{(1)}&x_4^{(1)}&\ldots\\
 x_3^{(1)} & x_4^{(1)}&x_5^{(1)}&\ldots\\
 \vdots & \vdots & \vdots &\ddots \\
 \end{matrix}
 \hspace{0.3cm}
 \vline
 \hspace{0.3cm}
 \begin{matrix}
 x_1^{(2)} & x_2^{(2)}&x_3^{(2)}&\ldots\\
 x_2^{(2)} & x_3^{(2)}&x_4^{(2)}&\ldots\\
 x_3^{(2)} & x_4^{(2)}&x_5^{(2)}&\ldots\\
 \vdots & \vdots & \vdots & \ddots \\
 \end{matrix}
 \hspace{0.3cm}
 \vline
 \hspace{0.3cm}
 \begin{matrix}
 x_1^{(3)} & x_2^{(3)}&x_3^{(3)}&\ldots\\
 x_2^{(3)} & x_3^{(3)}&x_4^{(3)}&\ldots\\
 x_3^{(3)} & x_4^{(3)}&x_5^{(3)}&\ldots\\
 \vdots & \vdots & \vdots & \ddots \\
 \end{matrix}
 \hspace{0.3cm}
 \vline
 \hspace{0.3cm}
 \begin{matrix}
 \ldots\\
 \ldots\\
 \ldots\\
 \vdots\\
 \end{matrix}
 \right] \nonumber
\end{align}
\noindent where $\bX^{(j)}=
\mathcal{T}_{\text {SSA}}\left(\mathbb{X}^{(j)}\right)=
\left[X_1^{(j)}: \ldots: X_{K_u}^{(j)}\right]$ is 
the trajectory matrix of the one-dimensional series 
$\mathbb{X}^{(j)}$.
Note the (anti-)diagonal structure of each $\bX^{(j)}$, 
making it a so-called {\it Hankel matrix}. The entire 
matrix $\bX$ is thus a {\it stacked Hankel matrix}.
The notations $\mathcal{T}_{\text {SSA }}$ and 
$\mathcal{T}_{\text {MSSA }}$ stand for the univariate 
and multivariate embedding operators that map 
$\mathbb{X}^{(j)}$ and $\mathbb{X}$ to the 
corresponding trajectory matrices.\vspace{3mm}

\noindent{\bf 2. Decomposition.}
This step performs the SVD of the trajectory 
matrix $\bX$, yielding 
\begin{equation}\label{eq:SVD}
    \bX=\sum_{r=1}^{d}\beta_r\bu_r\bv_r^T
\end{equation}
where $d=\operatorname{rank}(\bX)$, and 
$\bu_1, \ldots, \bu_d$ and $\bv_1, \ldots, \bv_d$ 
are the left and right singular vectors. 
The ordered singular values are 
$\beta_1 \geqslant \cdots \geqslant \beta_d>0$, 
and the matrices $\bX_r=\beta_r \bu_r \bv_r^T$ 
all have rank 1. The triple 
$\left(\beta_r, \bu_r, \bv_r\right)$ is called 
the $r$-th eigentriple of the matrix $\bX$.\\

\noindent{\bf 3. Grouping.}
The grouping step corresponds to splitting the terms 
of~\eqref{eq:SVD} into several disjoint 
groups and summing the matrices within each group. 
In this paper we focus on the case where two main 
groups are created, and we write
\begin{equation} \label{eq:XhatPlusR}
   \bX= \bhX_\q + \bR
\end{equation}
where the estimate of the signal $\bhX_\q$ is a sum 
like~\eqref{eq:SVD} but only for the first $\q$ 
eigentriples,  whereas the residual matrix 
$\bR = \bX - \bhX_\q$ is associated with the noise. 
We can see $\bhX_\q$ as a low-rank approximation 
of the trajectory matrix.\vspace{3mm}

\noindent{\bf 4. Reconstruction.}
In this step, the fitted matrix $\bhX_\q$ is  
transformed back to the form of the input 
object $\mathbb{X}$ in~\eqref{eq:stack}.
In each submatrix $\bhX_\q^{(j)}$ we compute
average entries as follows. Denote an anti-diagonal 
as $A_i=\{(l, k): l+k=i+1, 
1 \leqslant l \leqslant L, 
1 \leqslant k \leqslant K_u\}$ with its 
cardinality $n_i=|A_i|$. Each matrix 
$\bhX_\q^{(j)} = (\hX_{lk}^{(j)})_{lk}$ is 
turned into a new series 
$\widehat{\mathbb{X}}^{(j)}=
(\hat{x_i}^{(j)})_{i=1,\ldots,N}$ 
of length $N$, with
\begin{equation} \label{eq_aver}
  \hat{x}_i^{(j)} := \frac{1}{n_i}
  \sum_{(l, k) \in A_i} 
  \hX^{(j)}_{l k}\;.
\end{equation}
The reconstructed multivariate time series is 
then given by $\widehat{\mathbb{X}}=
 \left(\widehat{\mathbb{X}}^{(1)}, 
 \ldots, \widehat{\mathbb{X}}^{(p)}\right)$.

\subsection{Outliers in multivariate time series}

Multivariate time series may contain outliers, that is, 
observations that deviate from the expected patterns or trends.
They can be caused by a variety of factors such as measurement 
errors, data entry mistakes, sensor malfunctions, or rare and 
unexpected events. Outliers can bias statistical measures, 
affect parameter estimation, and lead to inaccurate forecasting, 
resulting in erroneous conclusions and flawed decision-making. 
As in the multivariate setting 
\citep{alqallaf2009propagation,raymaekers2023challenges}, we
distinguish between a {\it cellwise outlier}, which is an 
outlying value $x_i^{(j)}$ in the $j$-th univariate time series
only, and a {\it casewise outlier}, where at a time $i$ several 
or all of the values $x_i^{(j)}$ deviate simultaneously. 

Figure~\ref{fig_met} illustrates this for a 3-variate time series.
The purple triangle is a cellwise outlier which affects only the 
first univariate time series at time $i=3$, whereas the orange 
squares indicate a casewise outlier at time point $i=5$. 
The effect of both types of outliers on the trajectory 
matrix~\eqref{eq:stack} is shown in the bottom part of the 
figure. The cellwise outlier corresponds to an antidiagonal in 
the leftmost Hankel matrix only, whereas the casewise outlier 
affects all three stacked Hankel matrices.

\begin{figure}[!ht]
\centering
\includegraphics[width=0.99\textwidth]{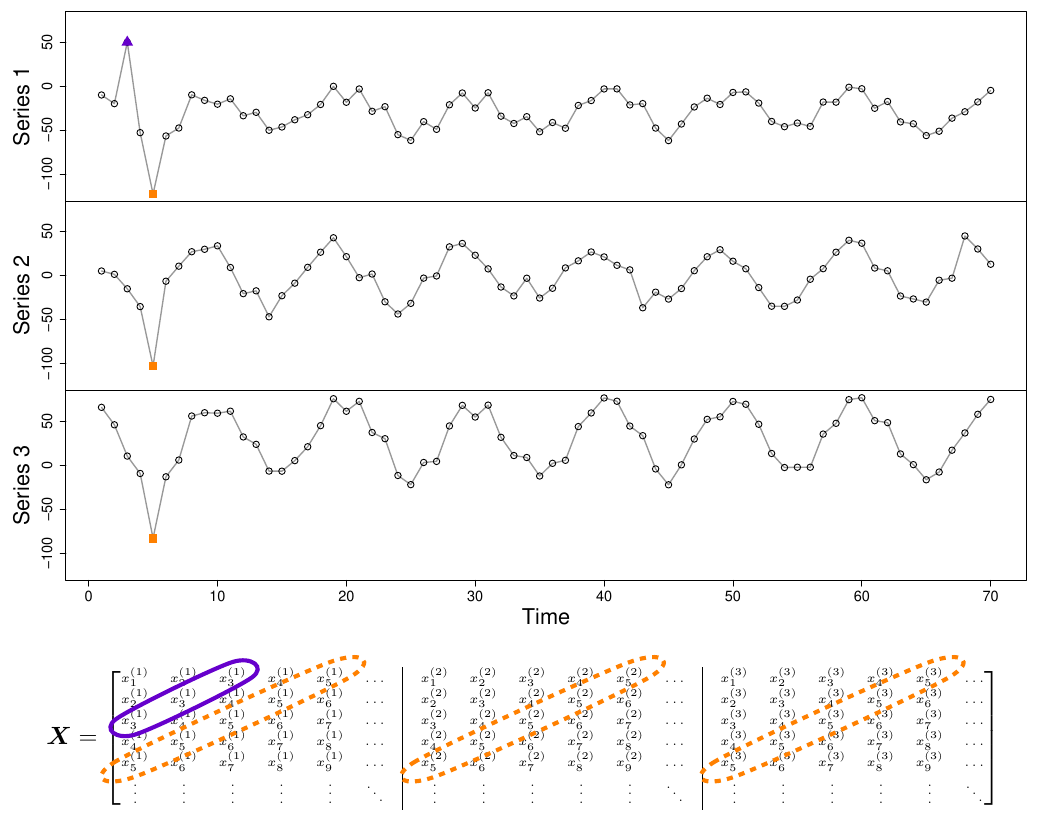}
\caption{The effect of a cellwise outlier (purple triangle) 
  and a casewise outlier (orange squares) on the diagonals 
  of the trajectory matrix $\bX$.}
\label{fig_met}
\end{figure}

\subsection{Existing robust SSA methods}
\label{sec:existing}

Classical MSSA, and its univariate version SSA, are 
sensitive to outliers because they are based on the SVD 
which is highly susceptible to outlying values.
In order to remedy this, there has been research into the
construction of outlier-robust SSA methods. In general, the
purpose of robust methods is to limit the effect of outliers 
on the results, after which the outliers can be detected by 
their residuals from the robust fit, see e.g.
\cite{RL1987}, \cite{maronna2019robust}. 

When constructing a more robust version of SSA one needs to
replace the classical SVD by something less sensitive
to outliers. One approach is to start from an outlier-robust 
principal component analysis (PCA) method. A PCA fit of rank
$\q$ is of the type
\begin{equation} \label{eq:PCA}
  \bhX = \bone_n \bhmu^T + \bU \bV^T 
\end{equation}
where $\bhmu$ is the estimated center, and the matrix of 
scores $\bU$ as well as the loadings matrix $\bV$ have
$\q$ columns. Many PCA methods have an option to set 
$\bhmu = \mathbf{0}$, and then~\eqref{eq:PCA} yields an 
approximation of rank $\q$ to $\bX$ as 
in~\eqref{eq:XhatPlusR}.
Afterward one can carry out the reconstruction step of SSA.
\cite{deklerk2015} applied the robust PCA method ROBPCA 
\citep{hubert2005robpca} to the trajectory matrix, and
then centered the original data $\bX$ as well as $\bhX$
by subtracting $\bone_n \bhmu^T$. In this way he
obtained a robust centered SSA method.

A limitation of this approach is that ROBPCA and most
other robust PCA methods are built to withstand
outlying rows of the data matrix, but not outlying cells. 
But we have seen in the bottom part of
Figure~\ref{fig_met} that a single outlying value in
a time series can affect many cells of the trajectory 
matrix $\bX$, especially when the window length $L$ is high.
Therefore a relatively small number of outliers in the
time series can affect over half the rows of $\bX$,
which ROBPCA might not withstand. In such
situations a cellwise robust PCA method like 
MacroPCA \citep{hubert2019macropca} could be used
instead.

In the nonrobust setting, expression~\eqref{eq:XhatPlusR} 
can equivalently be seen as a problem of low-rank 
approximation  of the trajectory matrix $\bX$ by 
a matrix $\widehat{\bX}_{L^2}$\,, which
minimizes
\begin{equation*}
   ||\bX-\bS||^2_{F}
\end{equation*}
over all $L \times K$ matrices $\bS$ of rank $\q$.
Writing the unknown $\bS$ as a 
product $\bS=\bU\bV^T$, where
$\bU=\left[\bu_1,\dots,\bu_\q\right]=
\left(\left[\bu^1,\dots,\bu^L\right]\right)^T$
is $L \times \q$ with
$\bu_i=\left(u_{i1},\dots,u_{iL}\right)^T$, and
$\bV=\left[\bv_1,\dots,\bv_\q\right]=
\left(\left[\bv^1,\dots,\bv^K\right]\right)^T$
is $K \times q$ with
$\bv_i=\left(v_{i1},\dots,v_{iK}\right)^T$, this is 
equivalent to minimizing 
\begin{equation} \label{eq_appF}
   ||\bX-\bU\bV^T||^2_{F}=
   \sum_{\ell=1}^{L}\sum_{k=1}^{K}
   \left(X_{\ell k}-\sum_{r=1}^\q u_{\ell r}v_{kr}
   \right)^2
   =\sum_{\ell=1}^{L}\sum_{k=1}^{K} R_{\ell k}^2
\end{equation}
with the residuals $R_{\ell k} := X_{\ell k}-
\sum_{r=1}^\q u_{\ell r}v_{kr}$\,. We then put 
$\widehat{\bX}_{L^2}:=
\widehat{\bU}_{L^2}\widehat{\bV}_{L^2}^T$.
The solution of the optimization~\eqref{eq_appF} is 
easily obtained through the SVD decomposition 
$\bX = \btU \bD \btV^T$ where $\bD$ is the diagonal matrix
of singular values. Restricting this to the $\q$ leading 
singular values 
$\beta_1 \geqslant \ldots \geqslant \beta_\q > 0$
we obtain $\bhX = \btU_\q \bD_q \btV_q^T$.
We can then absorb the singular values by putting
$\widehat{\bU}_{L^2} = \btU_\q \bD_q^{1/2}$
and $\widehat{\bV}_{L^2} = \btV_\q \bD_q^{1/2}$,
so that indeed $\widehat{\bX}=
\widehat{\bU}_{L^2}\widehat{\bV}_{L^2}^T$\,.
But the quadratic loss function in~\eqref{eq_appF} 
makes this a least squares fit, which is
very sensitive to outliers.

To remedy this, \cite{de2003framework} proposed a more
robust low-rank approximation by replacing~\eqref{eq_appF} 
by the minimization of the loss function
\begin{equation} \label{eq:DB}
  L_{\rho}\left(\bX-\bU\bV^T\right)
  :=\sum_{\ell=1}^{L}\sum_{k=1}^{K}\rho\left(
  \frac{X_{\ell k}-\sum_{r=1}^\q u_{\ell r}v_{kr}}
  {\hsigma}\right)
  =\sum_{\ell=1}^{L}\sum_{k=1}^{K}
  \rho\left(\frac{R_{\ell k}}{\hsigma}\right),
\end{equation}
where $\hsigma$ is a fixed scale estimate. 
The function $\rho$ must be continuous, even, 
non-decreasing for positive arguments, and 
satisfy $\rho(0)=0$. \cite{de2003framework} used
$$\rho(t) = \frac{t^2}{t^2 + 1}$$
that goes to 1 as $t \rightarrow \infty$. Therefore 
an outlying $t$ has much less effect on $\rho(t)$ 
than with the $\rho(t) = t^2$ in~\eqref{eq_appF}. 
They constructed an iterative algorithm for 
$\bhU \bhV^T$. For $\hsigma$ they took the median
absolute deviation of the residuals from 
an initial estimate $\bU_0\bV_0^T$.

Note that here and in the sequel the estimation target
is not the pair $(\bU,\bV)$ because that is not
uniquely defined. Indeed, if we take a nonsingular
$\q \times \q$ matrix $\bA$ we see that 
$(\bU\bA, \bV(\bA^{-1})^T)$ yields the same product
$\bU\bA\bA^{-1}\bV^T = \bU\bV^T$ as $(\bU,\bV)$,
and hence the same objective~\eqref{eq:DB}. The
actual estimation target is the product
$\bU\bV^T$\,.

\cite{chen2015robust} applied this general low-rank
approximation method to the decomposition step of SSA.
They replaced the function $\rho$ by 
Tukey’s biweight function 
\begin{equation} \label{eq_bis}
    \rho_c(t)=\begin{cases}
     \displaystyle 1 - \left(1 - \frac{t^2}{c^2}\right)^3 
      & |t|\leqslant c\\
      1 & |t|>c
    \end{cases}      
\end{equation}
with tuning constant $c = 4.685$, and used the method 
to filter seismic noise.

\cite{rodrigues2018robust} also constructed a robust SSA 
from the low-rank approximation method of 
\cite{de2003framework},
but replaced the function $\rho$ by $\rho(t)=|t|$ as in 
\cite{croux2003RAR}, yielding an $L^1$ objective. 
An advantage of the latter $\rho$ function is that 
$\hsigma$ can be moved out of~\eqref{eq:DB}, so one does 
not need to estimate $\sigma$ in advance. 
In subsequent work, \cite{rodrigues2020decomposition} 
carried out robust SSA based on the low-rank approximation
of \cite{zhang2013robust} which used the Huber function
\begin{equation}\label{eq:Huber}
  \rho_b(t) = \frac{t^2}{2} I(|t| \leqslant b) + 
            \left(b|t| - \frac{b^2}{2}\right)I(|t| > b)
\end{equation}
with $b=1.345$.
\cite{cheng2015application} performed a robust SSA 
analysis by applying a different robust low-rank 
approximation method due to \cite{candes2011robust}.

\subsection{The RODESSA objective function} \label{sec:obj}

The existing robust low-rank approximation methods described 
above are less sensitive to outliers than the classical SVD.
But none of them are tailored to the (possibly stacked)
Hankel structure of the trajectory matrix $\bX$. 
As we saw in the bottom part of Figure~\ref{fig_met},
an outlier in one of the time series corresponds to an
entire diagonal of the corresponding trajectory matrix. 
In the least squares low-rank approximation setting, 
algorithms have been developed that incorporate the Hankel 
structure of $\bX$ \citep{markovsky2008}, but no such
approach exists in the robust setting yet.

To fill this gap we propose a new robust method for MSSA, 
named RObust Diagonalwise Estimation of SSA (RODESSA), 
which explicitly takes into account the way outliers and
their residuals occur in the stacked Hankel 
structure of the trajectory matrix.
It is meaningful to talk about anomalous diagonals 
rather than anomalous rows of the trajectory matrix 
$\bX$. Therefore 
we propose to approximate the trajectory matrix $\bX$ 
by $\widehat{\bX}=\bU\bV^T$ obtained by minimizing
\begin{equation} \label{eq_rodi}
  L_{\rho_1,\rho_2}\left(\bX-\bU
  \bV^T\right) := 
  \sum_{i=1}^{N}p n_i \hsigma_2^2
  \rho_2\left(\frac{\sum_{j=1}^{p} 
  n_i\hsigma_{1,j}^2\rho_1\left(
  \sum_{a=1}^{n_i}(x_{i}^{(j)}-\hx_{ia}^{(j)})^2/
  (n_i\hsigma_{1,j}^2)\right)}{pn_i\hsigma_2^2}\right)
\end{equation}
over $(\bU,\bV)$ with $\rank(\bU\bV^T)=\q$.
Here $p$ is again the number of univariate time series
and $n_i$ is the length of the diagonal corresponding 
to $x_{i}^{(j)}$ as in \eqref{eq_aver}.
The predicted cell $\hx_{ia}^{(j)}$ is given by
$\sum_{r=1}^\q u_{i^*r}v_{a^*r}$ with 
$i^*=\min\lbrace L,i\rbrace +1-a$ and 
$a^*=\min\lbrace K_u,i\rbrace -(n_i-a)+K_u(j-1)$.
The fixed scales $\hsigma_{1,j}^2$ standardize the 
squared norms of the diagonal residuals of 
series $j$, given by
\begin{equation}\label{eq:r_ij}
  r_i^{(j)} := \frac{1}{n_i}\sum_{a=1}^{n_i}
  (x_{i}^{(j)} -\hx_{ia}^{(j)})^2\;.
\end{equation}
The overall scale $\hsigma_2^2$ standardizes 
the quantities
\begin{equation}\label{eq:r_i}
  r_i := \frac{1}{p} \sum_{j=1}^{p}\hsigma_{1,j}^2
  \rho_1\left(
  \frac{r_i^{(j)}}{\hsigma_{1,j}^2} \right)
\end{equation}  
based on all $p$ coordinates.
The functions $\rho_1$ and $\rho_2$ are defined for
nonnegative arguments and must be continuous and 
non-decreasing.
In our implementation both are of the form
$\rho(t) := \rho_c(\sqrt{t})$ where $\rho_c$ is 
Tukey's biweight~\eqref{eq_bis}.
[For $\rho_1(t) = \rho_2(t) = |t|$, 
\eqref{eq_rodi} would reduce to~\eqref{eq_appF}.] 
The goal of the normalization by $n_i\hsigma_{1j}^2$ 
inside $\rho_1$ in~\eqref{eq_rodi} is to give the
$r_i^{(j)}$ similar average sizes. Indeed, if the
$x_{i}^{(j)}-\hat{x}_{ia}^{(j)}$ 
would be independent normal random variables with 
variance $\hsigma_{1j}^2$\,, then $n_i r_i^{(j)}$ 
would follow a Gamma distribution with parameters 
$n_i/2$ and $2\hsigma_{1j}^2$ and thus with mean equal 
to $n_i\hsigma_{1j}^2$\,. Therefore the average of the
$r_i^{(j)}/\hsigma_{1j}^2$ would be close to~1. 
A similar argument applies for the normalization 
by $pn_i\hsigma_2^2$ inside $\rho_2$\,.

The functions $\rho_1$ and $\rho_2$ reduce the 
effect of cellwise and casewise outliers on the 
final estimates, because large values 
of $r_i^{(j)}$ and/or $r_i$ contribute less to the 
objective function. We can say that $r_i^{(j)}$ 
measures how prone the $i$-th value $x_i^{(j)}$ of 
time series $j$ is to be a cellwise outlier.
Analogously, $r_i$ reflects how prone the 
multivariate $\bx_i=(x_i^{(1)},\ldots,x_i^{(p)})^T$ 
is to be a casewise outlier.
Note that in the computation of $r_i$ the effect 
of cellwise outliers is tempered by the presence of 
$\rho_1$\,, to avoid that a single cellwise 
outlier would always result in a large 
casewise $r_i$\,.

\subsection{The IRLS algorithm}
\label{sec:alg}
We now address the optimization problem~\eqref{eq_rodi}.
The scale estimates $\hsigma_{1,j}$ and $\hsigma_2$ 
are constants, whose computation will be described
in Section~\ref{sec:implem}.
Because $L_{\rho_1,\rho_2}$ is continuously differentiable, 
its solution must satisfy the first-order necessary 
conditions for optimality.
They are obtained by setting the gradients of 
$L_{\rho_1,\rho_2}$ with respect to $\bu^1,\dots,\bu^L$ 
and $\bv^1,\dots,\bv^K$ to zero, yielding  
\begin{align}\label{eq_3b}
  \bV^T\bW^\ell(\bV\bu^\ell-\bX^\ell) &= \bzero, 
  \quad \ell=1,\dots,L,\nonumber\\
  \bU^T\bW_k(\bU\bv^k-\bX_k) &= \bzero, 
  \quad k=1,\dots,K,
\end{align}
where $\bX^1,\dots,\bX^L$ and $\bX_1,\dots,\bX_K$ are 
the rows and columns of $\bX$. 
Here $\bW^\ell$ is a $K \times K$ diagonal matrix, 
whose diagonal entries are equal to the $\ell$th row 
of the $L \times K$ weight matrix 
\begin{equation} \label{eq:updateW}
\bW=\lbrace w_{\ell k}\rbrace=\btW_c\odot\btW_r
\end{equation}
where the Hadamard product $\odot$ multiplies matrices 
entry by entry.
Analogously, $\bW_k$ is an $L \times L$ diagonal
matrix, whose diagonal entries are the $k$th
column of the matrix $\bW$.
The matrix $\btW_c$ in~\eqref{eq:updateW} is given by
$\btW_c=\left[\btW_c^{(1)}: \ldots: \btW_c^{(p)}\right]$ 
with $\btW_c^{(j)}=\mathcal{T}_{\mathrm{SSA}}
 \left(w_{c,1}^{(j)},\dots,w_{c,N}^{(j)}\right)$, 
containing the {\it cellwise weights}
\begin{equation}\label{eq:cellweight}
   w_{c,i}^{(j)}=\rho_1'\left(
   r_i^{(j)}/\hsigma^2_{1j}\right), 
   \quad i=1,\dots,N,\quad j=1\dots,p,
\end{equation}
in which $\rho_1'$ is the derivative of $\rho_1$\,.
The matrix $\btW_r$ is given by $\btW_r=
\left[\btW_r^{(1)}: \ldots: \btW_r^{(p)}\right]$,  
with each $\btW_r^{(j)}=\mathcal{T}_{\mathrm{SSA}}
 \left(w_{r,1},\dots,w_{r,N}\right)$ containing
the same {\it casewise weights}
\begin{equation}\label{eq:caseweight}
   w_{r,i}=\rho_2'\left(r_i/\hsigma^2_2\right), 
   \quad i=1,\dots,N\,.
\end{equation}
Outlying cells $x_i^{(j)}$ should get a small
cellwise weight $w_{c,i}^{(j)}$\, and outlying cases
$\bx_i$ should get a small casewise weight $w_{r,i}$\,.

The system~\eqref{eq_3b} is nonlinear because the weight 
matrices depend on the estimate, and the estimate
depends on the weight matrices. 
In such a situation one can resort to an iteratively 
reweighted least squares (IRLS) algorithm.
Note that for a fixed weight matrix $\bW$, 
the system~\eqref{eq_3b} coincides with the first-order 
necessary condition of the weighted least squares 
problem of minimizing
\begin{equation} \label{eq_4}
  \sum_{\ell=1}^{L}\sum_{k=1}^{K}w_{\ell k}
    \left(X_{\ell k}-\hX_{\ell k}\right)^2
\end{equation}
where $\hX_{\ell k}=\sum_{r=1}^\q u_{\ell r}v_{kr}$.
The optimization of~\eqref{eq_4} can be performed by 
alternating least squares \citep{gabriel1978least}. 
This minimizes~\eqref{eq_4} with respect to 
$\bm{u}^1,\dots,\bm{u}^L$ where $\bv^1,\dots,\bv^K$
and the weights are fixed, and alternates this 
with minimizing~\eqref{eq_4} with respect to
$\bv^1,\dots,\bv^K$ where $\bm{u}^1,\dots,\bm{u}^L$ 
and the weights are fixed. 

The overall algorithm starts from initial estimates 
$\bU_0,\bV_0$ of $\bU,\bV$ that will be described 
in Section~\ref{sec:implem}. New matrices 
$\bU_{t+1}$, $\bV_{t+1}$ are obtained from $\bU_t$, 
$\bV_t$ by
\begin{align} \label{eq_3c}
  \bv^k_{t+1}=
  \left(\bU_{t}^T\bW_{k,t}\bU_{t}\right)^{-1}
  \bU_{t}^T\bW_{k,t}\bX_k
  \quad &\mbox{for} \quad k=1,\dots,K,\nonumber\\
  \bu_{t+1}^\ell=
  \left(\bV_{t+1}^T\bW_{t}^\ell\bV_{t+1}\right)^{-1}
  \bV_{t+1}^T\bW^\ell_{t}\bX^\ell
  \quad &\mbox{for} \quad \ell=1,\dots,L.
\end{align}
Then the weight matrix is updated to $\bW_{t+1}$ as
in~\eqref{eq:updateW}. The iterative process continues 
until convergence, as summarized in Algorithm~\ref{al_1}.
\begin{algorithm}[H]
\caption{IRLS algorithm} \label{al_1}
\begin{algorithmic}[1]
  \STATE{Compute $\bU_0,\bV_0$, $\hsigma_{1,j}$ and 
     $\hsigma_2$ according to Section~\ref{sec:implem}}
     \Comment{\textit{Initialization} }
  \STATE{Set $t=0$}
  \STATE{Compute $\bW_{0}$ as in~\eqref{eq:updateW}}
  \REPEAT  
     \STATE{$\bv^k_{t+1}=
        \left(\bU_{t}^T\bW_{k,t}\bU_{t}\right)^{-1}
        \bU_{t}^T\bW_{k,t}\bX_k
        \quad \mbox{for} \quad k=1,\dots,K$}
     \STATE{$\bu_{t+1}^\ell= \left(
        \bV_{t+1}^T\bW_{t}^\ell\bV_{t+1}\right)^{-1}
        \bV_{t+1}^T\bW^\ell_{t}\bX^\ell
        \quad \mbox{for} \quad \ell=1,\dots,L$}
     \STATE{Compute $\bW_{t+1}$ as in~\eqref{eq:updateW}}
     \Comment{\textit{Weight matrix update} }
   \STATE{$t=t+1$}
  \UNTIL{$\;||\bU_t\bV_t^T - \bU_{t-1}\bV_{t-1}^T||_F
         < \nu\, ||\bU_{t-1}\bV_{t-1}^T||_F$ 
         \;for some tolerance $\nu$.} 
\end{algorithmic}
\end{algorithm}

\begin{theorem} \label{the_1}
Each iteration of Algorithm~\ref{al_1} decreases the 
objective function, that is,\linebreak 
$L_{\rho_1,\rho_2}(\bX - \bhX_{t+1}) \leqslant
 L_{\rho_1,\rho_2}(\bX - \bhX_t)$.
\end{theorem}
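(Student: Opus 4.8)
The plan is to read Algorithm~\ref{al_1} as a majorize--minimize (equivalently, IRLS) scheme and to split the descent property into two pieces: a \emph{majorization} inequality bounding the change of the RODESSA objective~\eqref{eq_rodi} by the change of the weighted least-squares surrogate~\eqref{eq_4} evaluated at the current weights $\bW_t$, and a \emph{minimization} step showing that the two sweeps in~\eqref{eq_3c} never increase that surrogate. Throughout I write $\bhX_t:=\bU_t\bV_t^T$ and let $r_{i,t},r_{i,t}^{(j)}$ denote the quantities~\eqref{eq:r_ij} and~\eqref{eq:r_i} at iteration~$t$.

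The single property driving the majorization is that $\rho_1$ and $\rho_2$ are \emph{concave} in their nonnegative arguments. For the implemented choice $\rho(t)=\rho_c(\sqrt{t})$ with $\rho_c$ the biweight~\eqref{eq_bis}, one checks directly that $t\mapsto 1-(1-t/c^2)^3$ is concave and nondecreasing on $[0,c^2]$ and constant thereafter, so the tangent-line bound $\rho(s)\leqslant\rho(s_0)+\rho'(s_0)(s-s_0)$ holds for all $s,s_0\geqslant 0$. I would apply this bound twice, peeling the composition in~\eqref{eq_rodi} from the outside in. The outer bound on $\rho_2$ at $s_0=r_{i,t}/\hsigma_2^2$ produces the casewise weight $w_{r,i,t}=\rho_2'(r_{i,t}/\hsigma_2^2)$ of~\eqref{eq:caseweight}, and the inner bound on $\rho_1$ at $s_0=r_{i,t}^{(j)}/\hsigma_{1j}^2$ produces the cellwise weight $w_{c,i,t}^{(j)}=\rho_1'(r_{i,t}^{(j)}/\hsigma_{1j}^2)$ of~\eqref{eq:cellweight}. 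The Gamma-type normalizing constants $pn_i\hsigma_2^2$ and $n_i\hsigma_{1j}^2$ cancel exactly against the matching factors in~\eqref{eq_rodi}. On each anti-diagonal $A_i$ of the $j$-th Hankel block the product of the two chain-rule factors is $w_{r,i,t}\,w_{c,i,t}^{(j)}$, which is precisely the entry $w_{\ell k,t}$ of the Hadamard product $\bW_t=\btW_c\odot\btW_r$ in~\eqref{eq:updateW}; moreover $n_i\bigl(r_{i,t+1}^{(j)}-r_{i,t}^{(j)}\bigr)=\sum_{(\ell,k)\in A_i}\bigl(R_{\ell k,t+1}^2-R_{\ell k,t}^2\bigr)$ with $R_{\ell k}:=X_{\ell k}-\hX_{\ell k}$. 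Since the diagonals partition the cells, summing over $i$ and $j$ collapses everything to
\[
  L_{\rho_1,\rho_2}(\bX-\bhX_{t+1})-L_{\rho_1,\rho_2}(\bX-\bhX_t)
  \;\leqslant\; \sum_{\ell=1}^{L}\sum_{k=1}^{K} w_{\ell k,t}\bigl(R_{\ell k,t+1}^2-R_{\ell k,t}^2\bigr),
\]
which is exactly the change in the weighted least-squares objective~\eqref{eq_4} at the frozen weights $\bW_t$.

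It then remains to show this right-hand side is nonpositive, and here the updates~\eqref{eq_3c} do the work as exact block-coordinate minimizers of~\eqref{eq_4}. With $\bW_t$ and $\bU_t$ held fixed, each $\bv^k_{t+1}$ is the weighted least-squares solution minimizing~\eqref{eq_4} over $\bv^k$; then with $\bW_t$ and $\bV_{t+1}$ held fixed, each $\bu^\ell_{t+1}$ minimizes~\eqref{eq_4} over $\bu^\ell$. Neither sweep can increase~\eqref{eq_4}, so $\sum_{\ell,k} w_{\ell k,t}R_{\ell k,t+1}^2\leqslant\sum_{\ell,k} w_{\ell k,t}R_{\ell k,t}^2$, and the displayed majorization yields $L_{\rho_1,\rho_2}(\bX-\bhX_{t+1})\leqslant L_{\rho_1,\rho_2}(\bX-\bhX_t)$.

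The main obstacle is the majorization bookkeeping, not the least-squares descent. I must supply the concavity of the biweight-based $\rho_1,\rho_2$ in their squared arguments (the statement only assumes continuity and monotonicity, so concavity is the extra ingredient the argument relies on), and then verify that the two successive tangent-line bounds, together with the normalizations in~\eqref{eq_rodi}, combine so that the chain-rule factors reproduce \emph{exactly} the cellwise and casewise weights whose entrywise product is the weight matrix~\eqref{eq:updateW}, and that the diagonal-sum identity $n_i(r_{i,t+1}^{(j)}-r_{i,t}^{(j)})=\sum_{(\ell,k)\in A_i}(R_{\ell k,t+1}^2-R_{\ell k,t}^2)$ aligns the per-diagonal weights with the per-cell residuals. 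Once this cancellation is in place, the alternating weighted least-squares sweeps deliver the descent almost for free.
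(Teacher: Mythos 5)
Your proposal is correct and follows essentially the same route as the paper: the paper packages your two successive tangent-line bounds into a single concavity lemma for the composite objective $L(\bof)$ viewed as a function of the vector of squared residuals, whose gradient is exactly $\vecmat(\bW_t)$, and then invokes the first-order concavity inequality together with the alternating weighted least-squares descent, just as you do. The only difference is organizational --- you unroll the chain rule level by level rather than stating concavity of the composition once --- and you correctly identify concavity of $t \mapsto \rho_c(\sqrt{t})$ as the extra hypothesis beyond the continuity and monotonicity assumed in Section~\ref{sec:obj}.
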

The proof is given in section A.1 of the Supplementary
Material. Since the objective function is decreasing 
and it has a lower bound of zero, it must converge.
Note that Proposition~\ref{al_1} is not restricted
to the functions $\rho_1$ and $\rho_2$ used in
RODESSA, which are of the type 
$\rho(t) = \rho_c(\sqrt{t})$ where $\rho_c$ is
Tukey's biweight \eqref{eq_bis}. All that is needed
is that the function $\rho(t)$ is differentiable and
concave. For this purpose $\rho_c$ could be replaced
by Huber's $\rho_b$ of \eqref{eq:Huber}, or the
function $\rho_{b,c}$ used in the wrapping
transform \citep{FROC2021}.

\subsection{Matters of implementation}
\label{sec:implem}
This section describes several implementation 
specifics: the initialization strategy to select 
$\bU_0$ and $\bV_0$ in the IRLS algorithm, the scale 
estimates $\hsigma_{1,j}$ and $\hsigma_2$\,, and how 
to select the loss function tuning constants, the 
rank $\q$, and the window length $L$.\vspace{3mm}

\noindent{\bf Scale estimates.}
Given initial estimates $\bU_0$ and $\bV_0$ (see below), 
the scale estimates $\hsigma_{1,j}^2$ and $\hsigma_2^2$ 
are computed as M-scales of the quantities  $r_i^{(j)}$  
from~\eqref{eq:r_ij} and the $r_i$ of~\eqref{eq:r_i}, 
all with respect to the fit $\bU_0\bV_0^{T}$.
A scale M-estimator of a univariate sample 
$\left(z_1,\dots,z_n\right)$
is the solution $\hsigma$ of the equation
\begin{equation}\label{eq:Mscale}
  \frac{1}{n} \sum_{i=1}^n 
  \rho\left(\frac{z_i}{\sigma}\right)=\delta
\end{equation}
where $0 < \delta < 1$. In our implementation we chose 
$\rho$ to be Tukey’s biweight $\rho_c$ of~\eqref{eq_bis}.
We set $\delta = 0.5$, which ensures a 50\% 
breakdown value, and $c= 1.548$, which is the 
solution of $\mathrm{E}[\rho_c(z)]=0.5$ for 
$z \sim \mathrm{N}(0,1)$, to obtain consistency at the 
normal model.\vspace{3mm}

\noindent{\bf Initialization.}
Since the objective function $L_{\rho_1,\rho_2}$ 
of~\eqref{eq_rodi} is not convex when $\rho_1$ are
$\rho_2$ are biweight functions, $\bU_0$ and $\bV_0$ 
should be carefully selected to avoid that the IRLS 
algorithm ends up in a nonrobust solution.
We compute several candidate initial fits,
and select the best one among them.
One candidate fit is given by the first $\q$ terms 
of the standard nonrobust SVD as in~\eqref{eq_appF},
yielding the rank-$\q$ matrix $\bhX_1$.
The second candidate $\bhX_2$ is the fit obtained
from \eqref{eq:DB} for $\rho(t)=|t|$, and the third 
candidate $\bhX_3$ is the fit of 
\cite{candes2011robust}. For each of these candidate
solutions $\bhX_k$ we apply the scale M-estimate
\eqref{eq:Mscale} to all the values $r_i^{(j)}$
given by~\eqref{eq:r_ij}.
Then we select the $\bhX_k$ with the lowest 
M-scale.\vspace{3mm}

\noindent{\bf Tuning constants.}
The functions $\rho_1$ and $\rho_2$ in~\eqref{eq_rodi}
use the biweight formula~\eqref{eq_bis}, by
$\rho_1(t)=\rho_{c_1}(\sqrt{t})$ and 
$\rho_2(t)=\rho_{c_2}(\sqrt{t})$.
Now we need to choose the tuning constants $c_1$ and 
$c_2$. Following \cite{aeberhard2021} we set 
these tuning constants using a measure of 
downweighting at the reference model.
The idea is to select tuning constants such that 
the average of the weights $w_{c,i}^{(j)}$  
of~\eqref{eq:cellweight} matches a target value 
$\delta_c$ and the average of the weights $w_{r,i}$
of~\eqref{eq:caseweight} matches a target 
value $\delta_r$\,.
For this computation the weights are standardized
to range from 0 to 1.
The target values determine how much $r_i^{(j)}$ and 
$r_i$ are downweighted on average, and by default we 
set $\delta_c = \delta_r = 0.9$\,.
The corresponding values of $c_1$ and $c_2$ are
obtained by Monte Carlo. This computation pretends
that the data are clean, with i.i.d.\ errors following
the standard normal distribution, and that the 
fitted $\bU\bV^T$ equals the true value.
We can then simulate the distribution of all the
weights $w_{c,i}^{(j)}$ and $w_{r,i}$ for the given 
values of $N$, $p$, and $L$, 
and choose $c_1$ and $c_2$.\vspace{3mm}

\noindent{\bf Selecting the rank $\q$.}
In classical MSSA, a popular way to select the rank
$\q$ is to make a plot of the unexplained variance.
This is the expression $||\bX-\bhX_r||^2_{F}$ as in 
\eqref{eq_appF}, where $\bhX_r$ is the best
approximation of $\bX$ of rank $r$. The unexplained
variance is monotone decreasing in $r$, and one
wants to select a value $\q$ where the curve has
an `elbow'. For the RODESSA method we 
inspect the analogous curve of the objective
function~\eqref{eq_rodi} at the rank-$r$ fit
$\bhX_r$\,, that is, we plot the curve of
\begin{equation*}
   L_{\rho_1,\rho_2}\left(\bX-\bhX_r\right)
\end{equation*}
versus $r$.\vspace{3mm}

\noindent{\bf Selecting the window length.} The choice of 
the window length $L$ in MSSA is a complex issue that has 
been addressed by \cite{hassani2013multivariate} and
\cite{golyandina2018singular}.
In general, $L$ should be selected to benefit either 
separability of the signal and the noise, or forecasting 
accuracy. Following \cite{golyandina2018singular} we use 
$L \simeq pN/(p+1)$  for the analysis of a small number 
$p$ of time series, and $L \simeq N/2$ otherwise.

\subsection{Forecasting with RODESSA}
\label{sec:for}

The MSSA model assumes a linear recurrent relation
\citep{golyandina2018singular}. This implies that
an observation can be predicted by a linear 
combination of the previous $L-1$ observations.
In classical MSSA the coefficients of this linear
combination are derived from the unique SVD fit
of rank $\q$ to the trajectory matrix $\bX$,
see e.g. \cite{danilov1997}.
For RODESSA the rank-$\q$ fit $\bhX_q$ can 
similarly be decomposed by the exact SVD, yielding
the $L \times \q$ matrix
$\btU=\left[\btu_1,\dots,\btu_\q\right]$ 
of left singular vectors 
and the $K \times \q$ matrix
$\btV=\left[\btv_1,\dots,\btv_\q\right]$ of 
right singular vectors.
Then the coefficient vector 
$\bha=\left(\ha_{L-1},\dots,\ha_{1}\right)^T$ is 
obtained as
\begin{equation*}
  \bha=\frac{\sum_{r=1}^\q\tilde{u}_{r,L}
       (\tilde{u}_{r,1},\ldots,\tilde{u}_{r,L-1})^T}
       {1-\sum_{r=1}^\q\tilde{u}_{r,L}^2}
\end{equation*}
where $\tilde{u}_{r,1},\ldots,\tilde{u}_{r,L}$
are the $L$ components of $\btu_r$\,. 
Let $(\hx_{i}^{(j)})_{i=1}^N$ be the reconstructed 
multivariate time series associated with the rank-$\q$
approximate trajectory matrix $\bhX_\q$.
Then the $h$-step ahead recurrent MSSA forecasts 
$\hx_{N+1}^{(j)},\dots,\hx_{N+h}^{(j)}$ for 
$j=1,\dots,p$ are given by
\begin{equation} \label{eq:forecast}
  \hx_i^{(j)}=\sum_{l=1}^{L-1}\ha_l \hx_{i-l}^{(j)}
  \quad \quad \mbox{for} \quad i=N+1,\dots,N+h\,.
\end{equation}

\section{Outlier detection}
\label{sec:out}

RODESSA implicitly provides information about outliers 
in multivariate time series. The cellwise weight 
$w_{c,i}^{(j)}$ given by~\eqref{eq:cellweight} reflects 
how much faith the algorithm has in the reliability of 
$x_i^{(j)}$, the value of the $j$-th time series at 
time $i$. The casewise weight $w_{r,i}$ given 
by~\eqref{eq:caseweight} does the same for the entire 
case $\bx_i = (x_i^{(1)},\ldots,x_i^{(p)})^T$ at time $i$. 
A small weight means that the corresponding cell or 
case was deemed less trustworthy, and was only
allowed to have a small effect on the fit. 

We propose a new graphical representation, 
called {\it enhanced time series plot}, which 
facilitates outlier detection by visualizing these 
weights in a single plot. Since the cellwise weights
$w_{c,i}^{(j)}$ are a monotone function of the
cellwise squared norms $r_i^{(j)}$ of~\eqref{eq:r_ij},
this can be seen as an extension of the cellmap
for multivariate data 
\citep{rousseeuw2018detecting,hubert2019macropca}
to time series.

We illustrate the enhanced time series plot on 
the publicly available Electricity Load Diagrams 
2011-2014 dataset 
\citep{misc_electricityloaddiagrams20112014_321}. 
It contains the electricity consumption of 
370 clients from January 2011 to December 2014. 
We plot the data of 4 clients from November 27th
to December 3rd 2014, with consumption observed 
every 2 hours. For the purpose of illustration 
we inserted some outliers. 

Figure \ref{fig_tsmap} shows its enhanced time 
series plot. The solid black lines correspond to 
the reconstructed multivariate time series.
The original cells $x_i^{(j)}$, connected 
by yellow solid lines, are represented by circles 
filled with a color. Those with cellwise weight 
$w_{c,i}^{(j)}$ close to~1 are filled with white.
Cells with a low weight and positive residual
are shown in increasingly intense red, and
those with a negative residual in increasingly
intense blue.
Moreover, $x_i^{(j)}$ is flagged as a cellwise 
outlier when $w_{c,i}^{(j)} < q_{c,\alpha}$ where 
$q_{c,\alpha}$ is the $\alpha$-quantile of the 
simulated distribution of cellwise weights at 
the reference model, described in 
Section~\ref{sec:implem}. Cells flagged as 
cellwise outliers are shown as solid squares,
in red when the residual is positive and in
blue when it is negative.

\begin{figure}[t]
\centering
\includegraphics[width=0.98\textwidth]
  {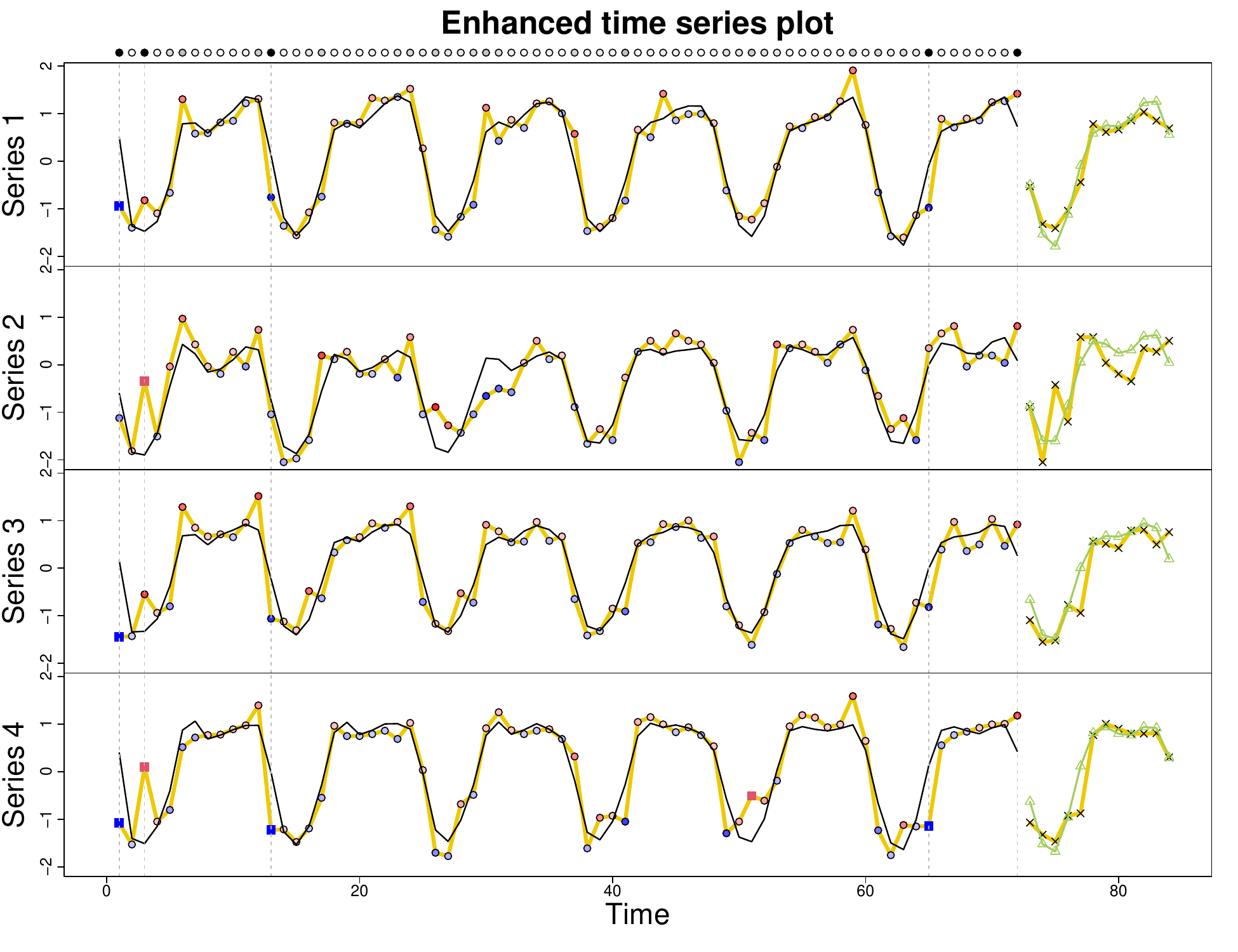}
\caption{Enhanced time series plot obtained by
applying RODESSA to the Electricity Load 
Diagrams data.}
\label{fig_tsmap}
\end{figure}

At the top of the plot we see circles 
reflecting the casewise weights $w_{r,i}$\,
with weight 1 shown with a  white interior
and lower weights in increasingly darker grey.
A case $\bx_i$ is flagged 
as a casewise outlier if $w_{r,i}$ is below
the $\alpha$-quantile $q_{r,\alpha}$ of the 
simulated distribution of casewise weights at 
the reference model. Casewise outliers are
indicated by a black solid circle and a 
vertical dashed grey line. The last one in 
the plot would not have been flagged
in the cellwise way.

In this example we also want to
plot the forecasts, which is not
part of the default enhanced time series plot.
The forecasts are shown as green 
triangles connected by green solid lines. 
The true values (which the algorithm did not
know about) were added as 
black crosses connected by yellow solid lines,
illustrating the forecasting performance.

\section{Simulation study}
\label{sec:sim}
In this section, the performance of the proposed 
RODESSA method in the presence of cellwise and 
casewise outliers is 
assessed through a Monte Carlo simulation study.
The data generation process is inspired by the 
simulated example in Section 3.3 of 
\cite{golyandina2018singular}.
Specifically, the multivariate time series is generated 
through the following signal plus noise model
\begin{equation*}
    x_i^{(j)}=  s^{(j)}(i) +\varepsilon_i \quad \quad 
    \text{for } i=1,\dots,N,\quad j=1,\dots,p\,,
\end{equation*}
with 
\begin{equation*}
    s^{(j)}(i)=A^{(j)}  \cos(2\pi i/ 10+C^{(j)})\,,
\end{equation*}
where $\varepsilon_i \sim N(0,\sigma^2)$, $\sigma=20$, 
$p=4$, and $N=70$. The constants $A^{(j)}$ and $C^{(j)}$ 
are set according to the three scenarios in 
Table~\ref{tab_sim}.

\begin{table}
\centering
\caption{Parameters  $ A^{(j)}$  and $ C^{(j)}$ 
  for scenarios 1, 2 and 3.}
\label{tab_sim}
\resizebox{0.5\textwidth}{!}{
\begin{tabular}{clcclcclcc}
\toprule
 &  & \multicolumn{2}{c}{Scenario 1}&  & 
 \multicolumn{2}{c}{Scenario 2}&  & 
 \multicolumn{2}{c}{Scenario 3}  \\ 
 \cline{1-1} \cline{3-4} \cline{6-7}\cline{9-10}
 $j$&& $ A^{(j)}$ &$ C^{(j)}$& &
  $ A^{(j)}$ &$ C^{(j)}$&&$ A^{(j)}$ &$ C^{(j)}$\\ 
 \cline{1-1} \cline{3-4} \cline{6-7}\cline{9-10}
 1&&20&0&&35&0&&20&0\\ 
 2&&30&0&&35&$\pi/ 5$&&30&$\pi/ 5$\\ 
 3&&40&0&&35&0&&40&0\\
 4&&50&0&&35&$\pi/ 5$&&50&$\pi/ 5$\\
\bottomrule
\end{tabular} }
\end{table}

In each scenario we consider cellwise and casewise 
contamination settings, with the fraction of 
outliers $\varepsilon$ equal to $0.1$ and $0.2$. 
In the cellwise contamination setting, outliers are 
introduced by adding the number $\gamma\sigma$ to 
$\varepsilon pN$ randomly chosen values $x_i^{(j)}$. 
We let $\gamma$ range over $\gamma=0,1,\dots,8$, 
so $\gamma=0$ corresponds with the uncontaminated 
setting.
In the casewise contamination setting, we add 
$\gamma\sigma$ to all values of $\varepsilon N$ 
randomly chosen 
$\bx_i = (x_i^{(1)},\dots,x_i^{(4)})^T$. 

Our proposed RODESSA method is compared with 
several competing approaches. We run the classical 
version of  MSSA (labeled CMSSA) and several robust 
versions described in Section~\ref{sec:existing}, 
which perform the decomposition step 
by~\eqref{eq:DB} with $\rho(t)=|t|$ as in 
\cite{rodrigues2018robust} (labeled RLM), as well 
as the method of \cite{cheng2015application} 
(labeled CHENG), and that of \cite{chen2015robust} 
(labeled CS). 

RODESSA is implemented as described in 
Section~\ref{sec:met} with $\q=2$, which 
corresponds to the true rank since for any 
angle $\varphi$ the signal 
$A \cos(2\pi i/10 + \varphi)$ equals the
linear combination\linebreak 
$A \cos(2\pi i/10)\cos(\varphi)  -
A \sin(2\pi i/10)\sin(\varphi)$ of the
functions $\cos(\varphi)$ and $\sin(\varphi)$.
Figure~\ref{fig:rank} plots the objective 
function for increasing rank, which also
indicates that most of the variability is 
explained by two components. We consider both 
$L =56 \simeq pN/(p+1)$ and $L=35=N/2$, and 
choose the tuning constants by setting 
$\delta_c=\delta_r=0.90$.

\begin{figure}[!ht]
\centering
\includegraphics[width=0.5\textwidth]
   {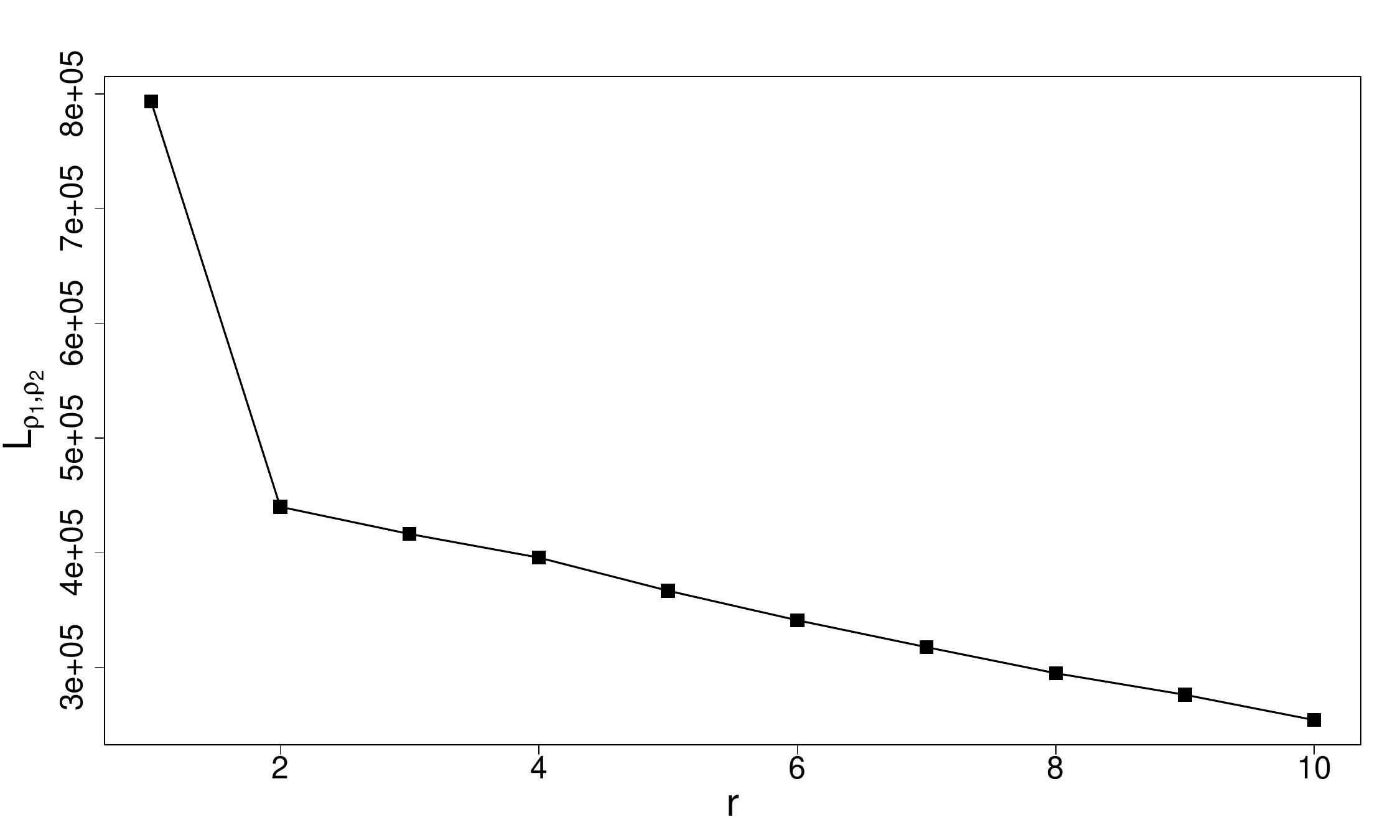}
\caption{Graph of the objective function for 
increasing values of the rank $r$.} 
\label{fig:rank}
\end{figure}

The competing approaches are implemented with the 
same values of $\q$ and $L$. To guarantee a fair 
comparison between RODESSA and CS, the tuning 
parameter in the CS method is chosen by means of 
the procedure in Section~\ref{sec:implem}. 
For each combination of scenario, contamination 
setting, magnitude $\gamma$ and outlier fraction, 
2000 replications are carried out.
The performance is assessed by means of the 
{\it reconstruction error} (RE) and the $20$-step 
ahead {\it forecasting error} (FE) that are 
defined as
\begin{equation}
\label{eq_fe}
  \text{RE}=\frac{1}{pN}\sum_{j=1}^{p}
  \sum_{i=1}^{N}
  \left(\hat{x}_i^{j}-s^{j}(i)\right)^2
  \quad \quad\ \quad
  \text{FE}=\frac{1}{20p}\sum_{j=1}^{p}
  \sum_{i=N+1}^{N+20}
  \left(\hat{x}_i^{j}-s^{j}(i)\right)^2\;.
\end{equation}
In the first formula the $\hat{x}_i^{j}$ are 
the reconstructions $(i \leqslant N)$. In the
second formula they are the forecasts ($i>N$) as 
defined in \eqref{eq:forecast}. The $\text{RE}$ 
and $\text{FE}$ are then averaged over the 
replications.

We report the results for the most challenging 
Scenario~3, with $L=N/2$. The supplementary 
material contains the simulation results for the 
other settings, which yield qualitatively 
similar conclusions. 

For cellwise contamination, 
Figure~\ref{fig:results_cell} shows the average 
RE and FE as a function of the contamination 
magnitude $\gamma$ for both outlier fractions. 
Without contamination ($\gamma=0$), CMSSA is the 
best method in terms of average RE and FE, as 
expected in this situation. But the errors of 
RODESSA and CS are similarly small, whereas 
those of CHENG and RLM are larger. 

\begin{figure}[!ht]
\centering
\includegraphics[width=0.73\textwidth]{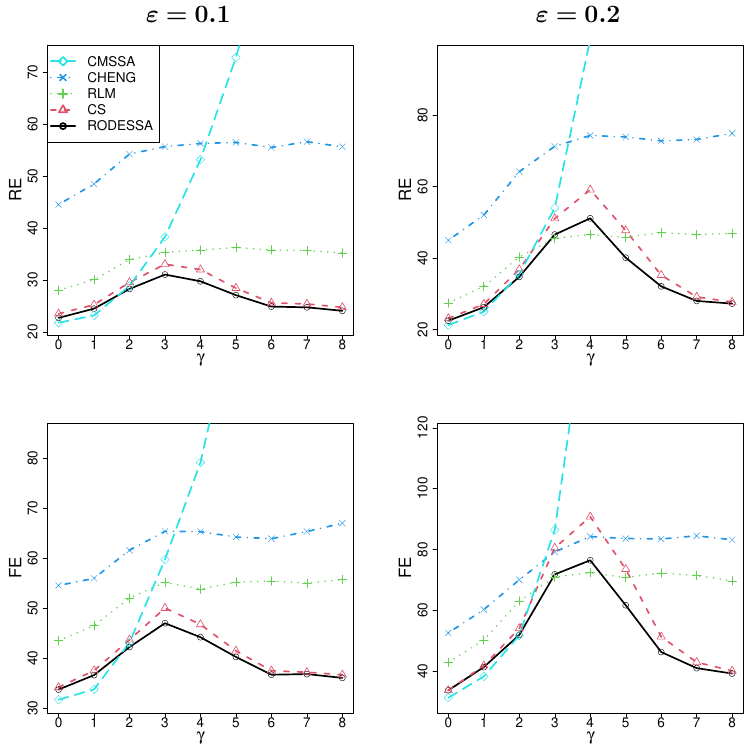}
\caption{Cellwise outliers. Average RE (top) and 
FE (bottom) attained by CMSSA, CHENG, RLM, CS, and 
RODESSA for Scenario 3, contamination probability  
$\varepsilon=0.1$ (left) and $\varepsilon=0.2$ 
(right), as a function of $\gamma$.}
\label{fig:results_cell}
\end{figure}

When looking at increasing $\gamma > 0$ it appears
that far outlying cells have an unbounded effect on 
CMSSA, a bounded effect on CHENG and RLM, and a small 
effect on RODESSA and CS, due to their bounded
biweight $\rho$ function. In that sense RODESSA and CS
are similar. But RODESSA does better than CS, due to
the fact that its decomposition step takes the
diagonal structure of the Hankel matrix into account.
The performance difference is largest for the higher 
outlier fraction $\varepsilon=0.2$.

\begin{figure}[!ht]
\centering
\includegraphics[width=0.73\textwidth]{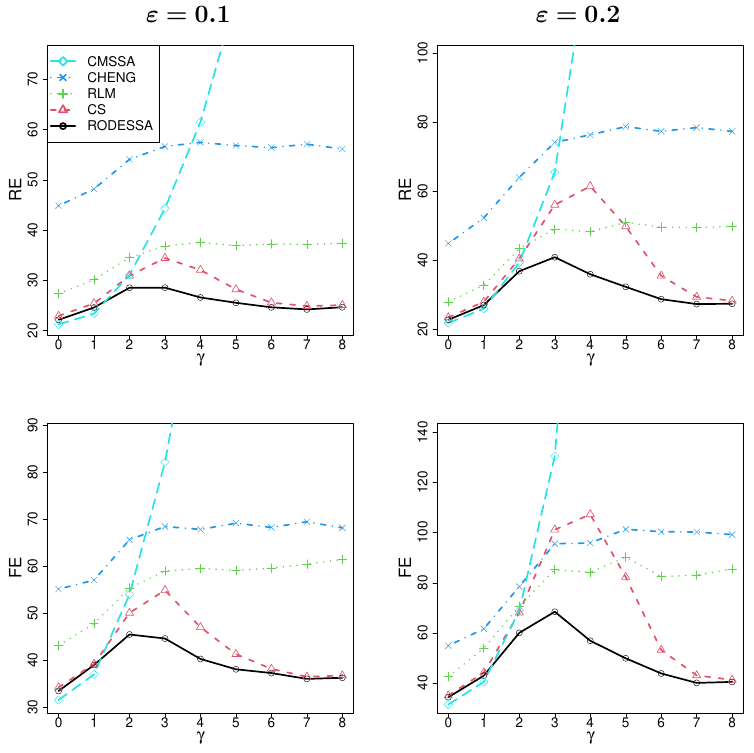}
\caption{Casewise outliers. Average RE (top) and 
FE (bottom) attained by CMSSA, CHENG, RLM, CS, and 
RODESSA for Scenario 3, contamination probability  
$\varepsilon=0.1$ (left) and $\varepsilon=0.2$ 
(right), as a function of $\gamma$.}
\label{fig:results_row}
\end{figure}

Figure \ref{fig:results_row} compares the same
methods in the presence of casewise outliers.
Here the differences are larger, with RODESSA
outperforming the other methods more strongly.
This is due to the fact that RODESSA can share
information about diagonals across the stacked
structure. This helps because we saw in 
Figure~\ref{fig_met} that casewise outliers 
affect several diagonals in the trajectory matrix 
simultaneously. Unlike the other methods, RODESSA 
combines information across such diagonals through
the loss function $\rho_2$ in \eqref{eq_rodi}, 
which makes it more robust for casewise outliers.

\section{A real data example: temperature  
analysis in passenger railway vehicles}
\label{sec:exa}
In this section, a real data example from the 
railway industry illustrates the applicability 
and potential of RODESSA.

In recent years, railway transportation in Europe 
has emerged as a viable alternative to other 
modes of transport, leading to intense competition 
between operators to enhance passenger satisfaction 
\citep{kallas2011white}. A particular challenge in 
this context is ensuring optimal thermal comfort 
inside passenger rail coaches \citep{ye2004thermal}. 
To address this challenge, new European standards, 
such as UNI EN 14750 \citep{en200614750}, have been 
developed. These standards focus on regulating air 
temperature, relative humidity, air speed, and 
overall comfort and air quality in passenger rail 
coaches, taking into account the diverse operating 
requirements of trains. Consequently, railway 
companies are proactively installing sensors to 
gather and analyze data from on-board heating, 
ventilation, and air conditioning (HVAC) systems 
\citep{homod2013review}. HVAC systems play a crucial 
role in maintaining passenger thermal comfort, and 
their performance is being improved based on the 
insights obtained from the collected data.

We analyze operational data from HVAC systems 
installed on a passenger 6-coach train operating 
during the summer season \citep{lepore2022neural}.
The data are publicly available at 
\url{https://github.com/unina-sfere/NN4MSP}\,.
Specifically, the inside temperature of each coach 
was recorded about every four minutes from 10:00  
to 22:00, yielding $N=176$ observations of a 
multivariate time series with $p=6$ components.
We applied RODESSA in its default form as described 
in Section~\ref{sec:implem}, with 
$L = 151 \simeq pN/(p+1)$. We selected $\q=7$ based
on the values of the objective function. This
yields the enhanced time series plot shown in
Figure~\ref{fig_tsdata}.

\begin{figure}[t]
\centering
\includegraphics[width=\textwidth]
  {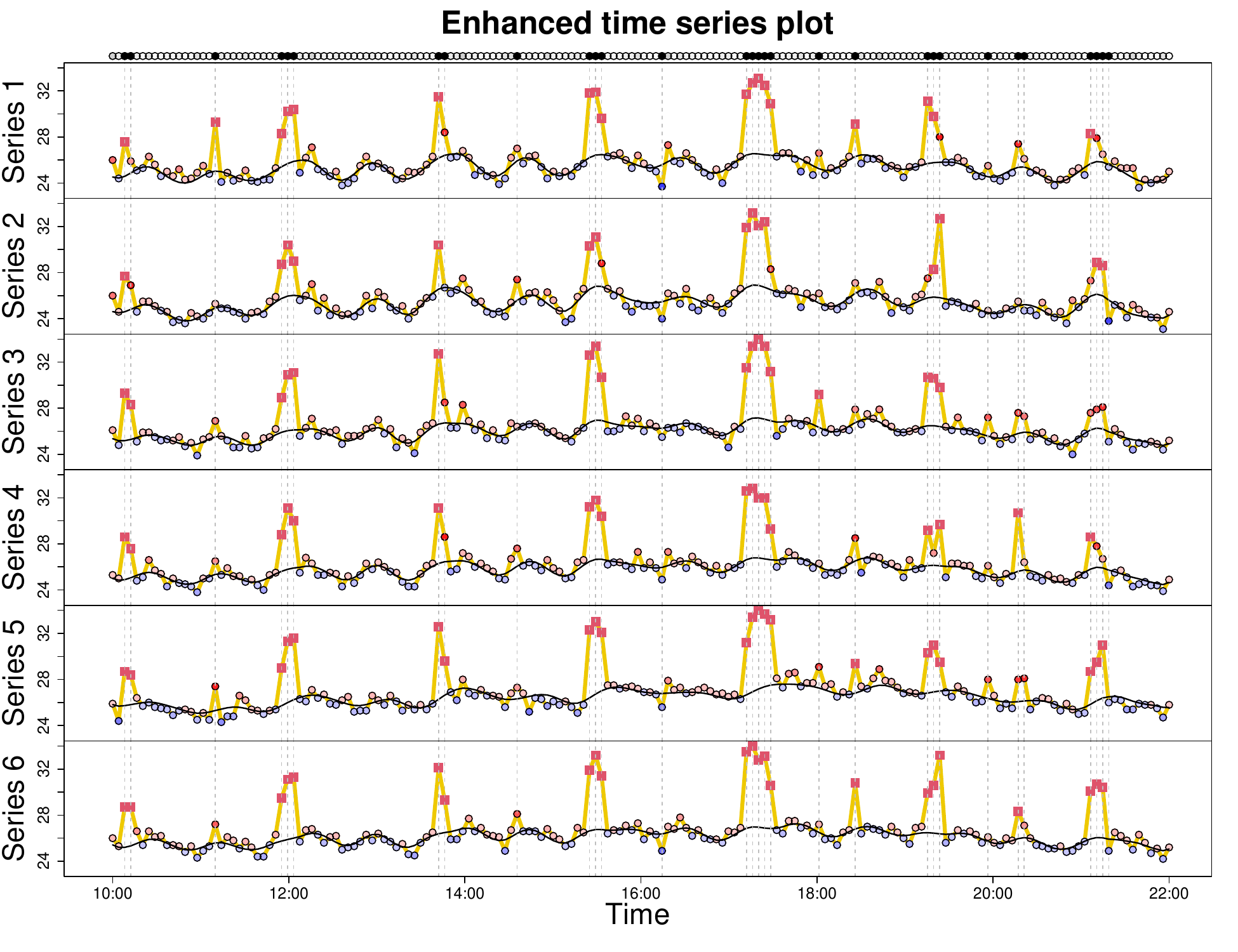}
\caption{Enhanced time series plot from applying 
RODESSA to the multivariate time series of 
temperatures.} \label{fig_tsdata}
\end{figure}

The vertical dashed lines in Figure~\ref{fig_tsdata}
indicate casewise outliers. It seems that the 
temperature inside the six coaches is significantly 
higher than expected in an almost periodic fashion.
This behavior is particularly severe for the group 
of casewise outliers between 17:00 and 18:00.
Discussions with domain experts revealed that those 
measurements were acquired during train stops at 
terminal stations. In this situation the coach doors 
were left open, and due to the summer season this
increased the temperature inside the coaches. 
From this we can conclude that those measurements 
are not representative of the operating conditions 
of the train and are not useful to characterize 
temperature dynamics.
Moreover, the reconstructed time series shows that 
the temperature inside coaches does increase and 
decrease periodically, which is in accordance with 
the on/off control of the HVAC system.
Note that not all casewise outliers are labeled as 
cellwise outliers. For instance, the 
casewise outlier shortly after 16:00 has no component 
that is flagged as a cellwise outlier, but the
temperatures were relatively low in all six coaches 
simultaneously. The ability to detect such effects
is a feature of the proposed method.

To assess the forecasting performance of RODESSA 
relative to the competing methods described in 
Section~\ref{sec:sim}, we used subsequences of 
the multivariate time series.
Specifically, starting from the first 120 
observations, $h$-step ahead forecasts of all 
methods are computed for $h=5,10,20$. 
Similarly, forecasts are obtained by considering 
the first 121, 122,... observations and so on.
For each subsequence, the $h$-step ahead 
forecasts are compared to the observed values of
the time series, by computing the median of their 
absolute differences, denoted as mFE.
Note that here we do not use the forecasting error 
formula in \eqref{eq_fe} because the data we are
predicting contains outliers as well.
 
Moreover, to assess the effect of the 
window length $L$ on the forecasting performance 
we compare the two lengths given in 
Section \ref{sec:implem}, namely $1/2$ and $6/7$ 
of the subsequence length $N_{sub}$\,.
Figure~\ref{fig_box} shows boxplots of mFE for 
$h=5,10,20$ and $L=N_{sub}/2$ (top) as well as 
$L=6N_{sub}/7$ (bottom).
The RODESSA method outperforms its competitors
overall, which is in line with the simulation 
study. 

\begin{figure}[ht]
\centering
\includegraphics[width=0.99\textwidth]
  {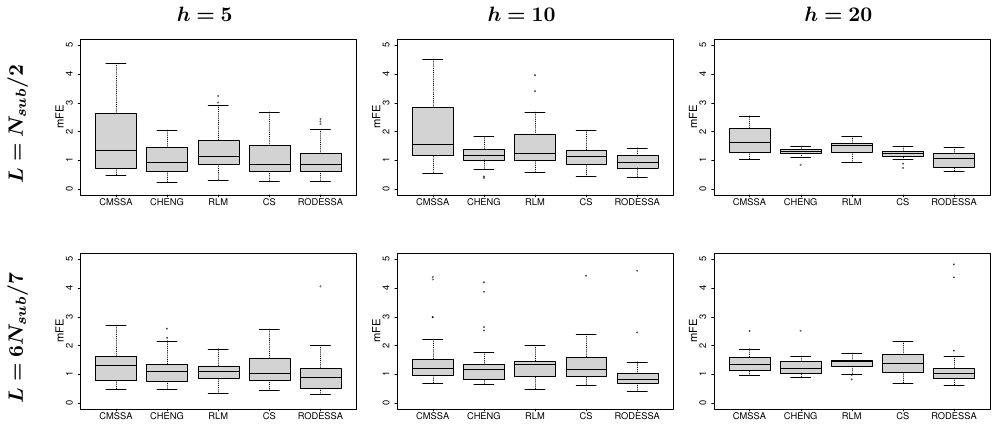}
\caption{Boxplots of mFE for $h=5$, $10$, $20$ 
and $L=N_{sub}/2$ (top) as well as 
$L=6N_{sub}/7$ (bottom).}
\label{fig_box}
\end{figure}

\section{Conclusions}
\label{sec:con} 

Multivariate singular spectrum analysis (MSSA) is a 
technique for fitting vector time series and
making forecasts. It starts by constructing a matrix
with a special structure. It consists of several
submatrices stacked side by side, one for each component
of the multivariate time series. A cell of the time 
series, that is, the value of one of its coordinates
at a given timepoint, corresponds to a diagonal in the
corresponding submatrix. A case of the time series, that
is, all of the coordinates at a given time point,
corresponds to diagonals in all of the submatrices.

Sometimes cells are outlying, and sometimes entire
cases. But the classical MSSA method can be strongly 
affected by outliers, because it is based on a 
low-rank singular value decomposition, which is a 
least squares fit. Several more robust methods
have been proposed in the literature, based on 
versions of the SVD that are less sensitive to 
outliers. However, these versions do not take the 
diagonal structure into account, and even a small
number of outliers can create a large number of 
outlying diagonal entries that affect many rows
and columns of the matrix, thereby overwhelming
the fit. To resolve this issue we propose the
RODESSA method, which explicitly takes the diagonal 
structure into account when decomposing the matrix.
This makes it more robust than its predecessors,
as illustrated in the extensive simulation study
reported in Section 4 and the Supplementary
Material. Moreover, it loses little
efficiency when there are no outliers.

The RODESSA method is performed by a fast
algorithm based on iteratively reweighted
least squares. In Section 2 we prove a 
proposition stating that each step of the 
algorithm decreases the objective function. 
We also propose a new graphical
display, called the enhanced time series plot,
which visualizes the weights of the cells as well
as the cases, making outliers stand out.
We have applied the RODESSA method to a real 
multivariate time series about temperatures in 
passenger railway vehicles. This illustrates
its good forecasting performance, as well as
the convenient outlier detection by the enhanced 
time series plot.


\paragraph{Software availability.}
The R code and example scripts, and 
the data of the example in Section~\ref{sec:exa},
are publicly available on the webpage
\url{https://wis.kuleuven.be/statdatascience/robust}\,.

\paragraph{Funding Details.}
This work was supported by the Flanders Research 
Foundation (FWO) under Grant for a scientific stay 
in Flanders V505623N; and by Piano Nazionale di 
Ripresa e Resilienza (PNRR) - Missione 4 
Componente 2, Investimento 1.3-D.D. 1551.11-10-2022, 
PE00000004 within the Extended Partnership MICS 
(Made in Italy - Circular and Sustainable).

\spacingset{1}


\clearpage
\pagenumbering{arabic}
\appendix
\begin{center}
\large{Supplementary Material to: 
  Multivariate Singular Spectrum Analysis\\ by  
  Robust Diagonalwise Low-Rank Approximation}\\
\vspace{7mm}
\normalsize{Fabio Centofanti, Mia Hubert, 
  Biagio Palumbo, Peter J. Rousseeuw} 
\end{center}
\vspace{3mm}

\numberwithin{equation}{section} 
\renewcommand{\theequation}
  {A.\arabic{equation}} 

\spacingset{1.45} 

\section*{\large A.1\;\; Proof of Proposition~\ref{the_1}}
\label{sec:proof}

In this section Proposition~\ref{the_1} is proved, which 
ensures that each step of the algorithm decreases the
RODESSA objective function \eqref{eq_rodi}. In order to
prove Proposition~\ref{the_1}, we first need two
lemmas.

We will denote a potential fit as 
$\btheta = \bhX = \bU\bV^T$ where $\btheta$ 
belongs to the set of all $L \times K$ matrices
of rank at most $\q$.
We introduce the notation $\bof(\btheta)$ for the
column vector 
$\left( f_1(\btheta),\dots,f_{LK}(\btheta)\right)^T$
with $LK$ entries, which are the values 
$(\bX_{\ell k} - \btheta_{\ell k})^2$ for 
$\ell = 1,\ldots,L$ and $k = 1,\ldots,K$.
We can then write the RODESSA objective function 
\eqref{eq_rodi} as $L(\bof(\btheta)) := 
L_{\rho_1,\rho_2}(\bX-\btheta)$. 

\begin{lemma}
\label{le_1}
The function $\bof \rightarrow L(\bof)$ is concave.
\end{lemma}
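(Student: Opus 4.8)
The plan is to establish the concavity of $\bof \mapsto L(\bof)$ by building it up from the inside out, exploiting the fact that the dependence of $L$ on $\bof$ is a composition of affine maps, the scalar functions $\rho_1$ and $\rho_2$, and nonnegative-weighted sums. Throughout I would use the standing assumptions recorded after Proposition~\ref{the_1}: that $\rho_1$ and $\rho_2$ are differentiable, non-decreasing, and concave on $[0,\infty)$. The natural domain is the nonnegative orthant, since the coordinates of $\bof$ are squared residuals; this set is convex, so concavity is meaningful there.

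First I would observe that each squared residual $(x_i^{(j)}-\hx_{ia}^{(j)})^2$ appearing in \eqref{eq_rodi} is exactly one coordinate $f_{\ell k}(\btheta)$ of $\bof$: the Hankel indexing $(i^*,a^*)$ identifies the cell $(\ell,k)$, so the inner sum $\sum_{a=1}^{n_i}(x_i^{(j)}-\hx_{ia}^{(j)})^2$ is a sum of selected coordinates of $\bof$. Writing it as $S_i^{(j)}(\bof)=\bs_{ij}^T\bof$ with a $0/1$ selection vector $\bs_{ij}$ exhibits it as a \emph{linear}, hence affine, function of $\bof$. Consequently the argument of $\rho_1$, namely $S_i^{(j)}/(n_i\hsigma_{1,j}^2)$, is affine in $\bof$.

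Next I would assemble the two composition facts needed. Since $\rho_1$ is concave and a concave function composed with an affine map is concave, each term $\rho_1(S_i^{(j)}/(n_i\hsigma_{1,j}^2))$ is concave in $\bof$. The weights $n_i\hsigma_{1,j}^2$ are strictly positive, so the inner quantity $T_i(\bof):=\sum_{j=1}^p n_i\hsigma_{1,j}^2\,\rho_1(\cdot)$ is a nonnegative combination of concave functions, hence concave in $\bof$, and dividing by $pn_i\hsigma_2^2>0$ preserves this. The one place where monotonicity is genuinely used is the outer layer: the argument of $\rho_2$ is now merely concave, not affine, so to conclude that $\rho_2(T_i/(pn_i\hsigma_2^2))$ is concave I invoke the rule that a non-decreasing concave function composed with a concave function is concave. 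This follows in one line: applying $\rho_2$ to the concavity inequality for $T_i$ only raises the left-hand side by monotonicity, and concavity of $\rho_2$ then bounds the right-hand side below by the required convex combination. Finally, multiplying by $pn_i\hsigma_2^2>0$ and summing over $i=1,\dots,N$ — again a nonnegative combination of concave functions — shows that $L(\bof)$ is concave.

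The routine work is just tracking the positive constants and the indexing that identifies residuals with coordinates of $\bof$; the only conceptual point to get right is the outer composition, where the monotonicity of $\rho_2$ is essential, whereas for $\rho_1$ the affineness of its argument makes monotonicity unnecessary. I expect the main obstacle to be stating the composition lemma cleanly and checking that the standing hypotheses — differentiable, non-decreasing, concave — are exactly what the argument consumes, so that the conclusion covers not only the biweight-based choice $\rho(t)=\rho_c(\sqrt t)$ but also the Huber function \eqref{eq:Huber} and the wrapping alternative mentioned after Proposition~\ref{the_1}.
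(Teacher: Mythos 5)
Your argument is correct and follows essentially the same route as the paper: the inner argument of $\rho_1$ is affine in $\bof$, so concavity of $\rho_1$ alone handles the inner layer, while the outer layer uses that $\rho_2$ is both non-decreasing and concave, and positive-weighted sums preserve concavity. The only piece the paper includes that you defer to a ``standing assumption'' is the explicit verification that $\rho(t)=\rho_c(\sqrt{t})$ is concave (the paper computes $\rho''(t)=\tfrac{-6}{c^4}\left(1-\tfrac{t}{c^2}\right)I(0\leqslant t\leqslant c^2)\leqslant 0$); you should carry out that check so the lemma actually applies to RODESSA's chosen $\rho_1,\rho_2$.
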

\begin{proof}
We first show that the univariate function
$\rho: \mathbb R_{+} \rightarrow \mathbb R_{+}:
 t \rightarrow \rho_c(\sqrt{t})$, in which $\rho_c$
is Tukey's biweight function \eqref{eq_bis},
is concave. From  
$\rho(t) = 1 - (1 - t/c^2)^3 
I(0 \leqslant t \leqslant c^2)$
we can compute the second derivative
$$\rho''(t) \;=\; \frac{-6}{c^4}
\left(1-\frac{t}{c^2}\right)
I(0 \leqslant t \leqslant c^2) 
\;\leqslant\; 0$$
from which concavity follows. The functions
$\rho_1$ and $\rho_2$ in RODESSA are thus
concave.

By the definition of concavity of a multivariate 
function, we now need to prove that for 
any column vectors $\bof,\bol$ in 
$\mathbb R_{+} ^{LK}$ and any
$\lambda$ in $\left(0,1\right)$ it holds that  
$L(\lambda\bof+(1-\lambda)\bol) \geqslant 
\lambda L(\bof)+(1-\lambda)L(\bol)$.
This works out as
\begin{gather*}
 L(\lambda\bof+(1-\lambda)\bol)=\\
 \sum_{i=1}^{N}p n_i \hsigma_2^2\rho_2\left(\frac{\sum_{j=1}^{p} 
  n_i\hsigma_{1,j}^2\rho_1\left(
 \sum_{a=1}^{n_i}(\lambda f_{ia}^{(j)}+(1-\lambda)g_{ia}^{(j)})/
  (n_i\hsigma_{1,j}^2)\right)}{pn_i\hsigma_2^2}\right)\geqslant\\
  \sum_{i=1}^{N}p n_i \hsigma_2^2\rho_2\left(\frac{\sum_{j=1}^{p} 
  n_i\hsigma_{1,j}^2\left[\lambda\rho_1\left(
  \sum_{a=1}^{n_i}f_{ia}^{(j)}/
  (n_i\hsigma_{1,j}^2)\right)+(1-\lambda)\rho_1\left(\sum_{a=1}^{n_i}g_{ia}^{(j)}/
  (n_i\hsigma_{1,j}^2)\right)\right]}{pn_i\hsigma_2^2}\right)=\\
\end{gather*}
\begin{gather*}  
  \sum_{i=1}^{N}p n_i \hsigma_2^2\rho_2\left(\frac{\lambda\sum_{j=1}^{p} 
  n_i\hsigma_{1,j}^2\rho_1\left(
  \sum_{a=1}^{n_i}f_{ia}^{(j)}/
  (n_i\hsigma_{1,j}^2)\right)+(1-\lambda)\sum_{j=1}^{p} 
  n_i\hsigma_{1,j}^2\rho_1\left(\sum_{a=1}^{n_i}g_{ia}^{(j)}/
  (n_i\hsigma_{1,j}^2)\right)}{pn_i\hsigma_2^2}\right)\geqslant\\
  \hspace{-6cm}\sum_{i=1}^{N}p n_i \hsigma_2^2\left[\lambda\rho_2\left(\frac{\sum_{j=1}^{p} 
  n_i\hsigma_{1,j}^2\rho_1\left(
  \sum_{a=1}^{n_i}f_{ia}^{(j)}/
  (n_i\hsigma_{1,j}^2)\right)}{pn_i\hsigma_2^2}\right)\right.\\
  \left.\hspace{5cm} +\,(1-\lambda)\rho_2\left(\frac{\sum_{j=1}^{p} 
  n_i\hsigma_{1,j}^2\rho_1\left(\sum_{a=1}^{n_i}g_{ia}^{(j)}/
  (n_i\hsigma_{1,j}^2)\right)}{pn_i\hsigma_2^2}\right)\right]=\\
   \lambda L(\bof)+(1-\lambda)L(\bol).
\end{gather*}
The first inequality derives from the concavity of $\rho_1$
and the fact that $\rho_2$ is nondecreasing. The second 
inequality is due to the concavity of $\rho_2$\,. 
Therefore $L$ is a concave function.
\end{proof}

We can also write the weighted least squares objective
\eqref{eq_4} as a function of $\bof$. We will
denote it as $L_{\bW}(\bof) := (\vecmat(\bW))^T \bof$.
Here $\vecmat(.)$ turns a matrix into a column vector, 
in the same way as was done to obtain the column 
vector $\bof$. The next lemma makes a connection
between $L_{\bW}$ and the RODESSA objective $L$.

\begin{lemma} \label{lem_2}
If two column vectors $\bof,\bol$ in 
$\mathbb R_{+} ^{LK}$ satisfy
$L_{\bW}(\bof) \leqslant L_{\bW}(\bol)$, then 
$L(\bof) \leqslant L(\bol)$. 
\end{lemma}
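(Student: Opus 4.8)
The plan is to combine the concavity established in Lemma~\ref{le_1} with the key observation that the vectorized weight matrix is precisely the gradient of $L$ at the point where the weights are formed. Throughout I read $\bW$ as the weight matrix~\eqref{eq:updateW} evaluated at the fit corresponding to $\bol$, so that $L_{\bW}(\bof)=(\vecmat(\bW))^T\bof$ is a linear functional of $\bof$; this is the interpretation under which the lemma will be applied inside the proof of Proposition~\ref{the_1}, with $\bol$ the current iterate and $\bof$ its successor.

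First I would verify that $\vecmat(\bW)=\bnabla L(\bol)$. Writing $f_{ia}^{(j)}$ for the squared-residual coordinates carried by $\bol$, recall from~\eqref{eq:r_ij} that $r_i^{(j)}=\tfrac{1}{n_i}\sum_{a=1}^{n_i}f_{ia}^{(j)}$ and from~\eqref{eq:r_i} that $r_i=\tfrac{1}{p}\sum_{j=1}^{p}\hsigma_{1,j}^2\rho_1(r_i^{(j)}/\hsigma_{1,j}^2)$. Differentiating the RODESSA objective~\eqref{eq_rodi} by the chain rule, the normalizing factors $n_i\hsigma_{1,j}^2$ inside $\rho_1$ and $pn_i\hsigma_2^2$ inside $\rho_2$ cancel exactly against the matching prefactors in~\eqref{eq_rodi}, leaving
\begin{equation*}
  \frac{\partial L}{\partial f_{ia}^{(j)}}
  = \rho_2'\!\left(\frac{r_i}{\hsigma_2^2}\right)\,
    \rho_1'\!\left(\frac{r_i^{(j)}}{\hsigma_{1,j}^2}\right).
\end{equation*}
By~\eqref{eq:cellweight} and~\eqref{eq:caseweight} this is $w_{r,i}\,w_{c,i}^{(j)}$, which is exactly the entry $w_{\ell k}$ of $\bW=\btW_c\odot\btW_r$ at the cell $(\ell,k)$ associated with $(i,a,j)$. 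Hence $\bnabla L(\bol)=\vecmat(\bW)$.

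Next I would invoke concavity. Because $L$ is concave by Lemma~\ref{le_1}, its graph lies beneath every supporting hyperplane, so
\begin{equation*}
  L(\bof)\leqslant L(\bol)+\bigl(\bnabla L(\bol)\bigr)^T(\bof-\bol).
\end{equation*}
Substituting $\bnabla L(\bol)=\vecmat(\bW)$ and using $L_{\bW}(\bof)=(\vecmat(\bW))^T\bof$, the right-hand side becomes $L(\bol)+L_{\bW}(\bof)-L_{\bW}(\bol)$. The hypothesis $L_{\bW}(\bof)\leqslant L_{\bW}(\bol)$ then makes the last two terms nonpositive, giving $L(\bof)\leqslant L(\bol)$, as claimed.

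The main obstacle is the gradient computation in the second step: one must track the several layers of normalization ($1/n_i$, the $\hsigma_{1,j}^2$, and $pn_i\hsigma_2^2$) with enough care that they cancel and leave precisely the product of the two derivative weights, and one must match the resulting index $(i,a,j)$ to the correct cell of the (stacked) Hankel weight matrix $\btW_c\odot\btW_r$. Once this identification of $\vecmat(\bW)$ with $\bnabla L(\bol)$ is secured, the remainder is the standard first-order characterization of concave functions and the conclusion is immediate.
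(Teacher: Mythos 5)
Your proof is correct and follows essentially the same route as the paper: concavity plus differentiability give the first-order (supporting hyperplane) inequality $L(\bof)\leqslant L(\bol)+(\bnabla L(\bol))^T(\bof-\bol)$, and the identification $\bnabla L(\bol)=\vecmat(\bW)$ turns the right-hand side into $L(\bol)+L_{\bW}(\bof)-L_{\bW}(\bol)$. The only difference is that you carry out the chain-rule computation showing the normalizations cancel to yield $w_{r,i}w_{c,i}^{(j)}$, which the paper asserts as holding ``by construction''; your explicit verification is correct.
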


\begin{proof}
From Lemma \ref{le_1} we know that $L(\bof)$ is 
concave as a function of $\bof$, and it is also 
differentiable because $\rho_1$ and $\rho_2$ are.
Therefore
\begin{equation*}
    L(\bof) \leqslant L_(\bol) +
    (\nabla L(\bol))^T(\bof-\bol)
\end{equation*}
where the column vector $\nabla L(\bol)$ is the 
gradient of $L$ in $\bol$. But $\nabla L(\bol)$ 
equals $\vecmat(\bW)$ by\linebreak construction, 
so $(\nabla L(\bol))^T(\bof-\bol) =
L_{\bW}(\bof) - L_{\bW}(\bol) \leqslant 0$.
Therefore $L(\bof) \leqslant L(\bol)$.
\end{proof}
With this preparation we can prove
Proposition \ref{the_1}.

\begin{proof}[Proof of Proposition \ref{the_1}]
When we go from $\btheta_t$ to $\btheta_{t+1}$ 
the least squares fits in steps 5 and 6\linebreak 
of Algorithm 1 ensure that 
$L_{\bW_t}(\bof(\btheta_{t+1})) \leqslant 
L_{\bW_t}(\bof(\btheta_t))$, so it follows from
Lemma 2 that\linebreak 
$L(\bof(\btheta_{t+1})) \leqslant 
L(\bof(\btheta_t))$. 
\end{proof}

\section*{\large A.2\;\; Additional 
          simulation results}

In this section we present all the simulation 
results according to the settings described in 
Section~\ref{sec:sim}.  
In the uncontaminated setting ($\gamma = 0$),
Figure~\ref{fig:dataic1} shows examples of  
multivariate time series generated by 
scenarios 1, 2 and 3. 
Scenario 1 generates series in which the 
components have increasing amplitude, whereas 
Scenario 2 shifts their phase. Scenario 3 
combines both effects. 

\begin{figure}[!ht]
\centering
\vspace{5mm}
\includegraphics[width=0.87\textwidth]
  {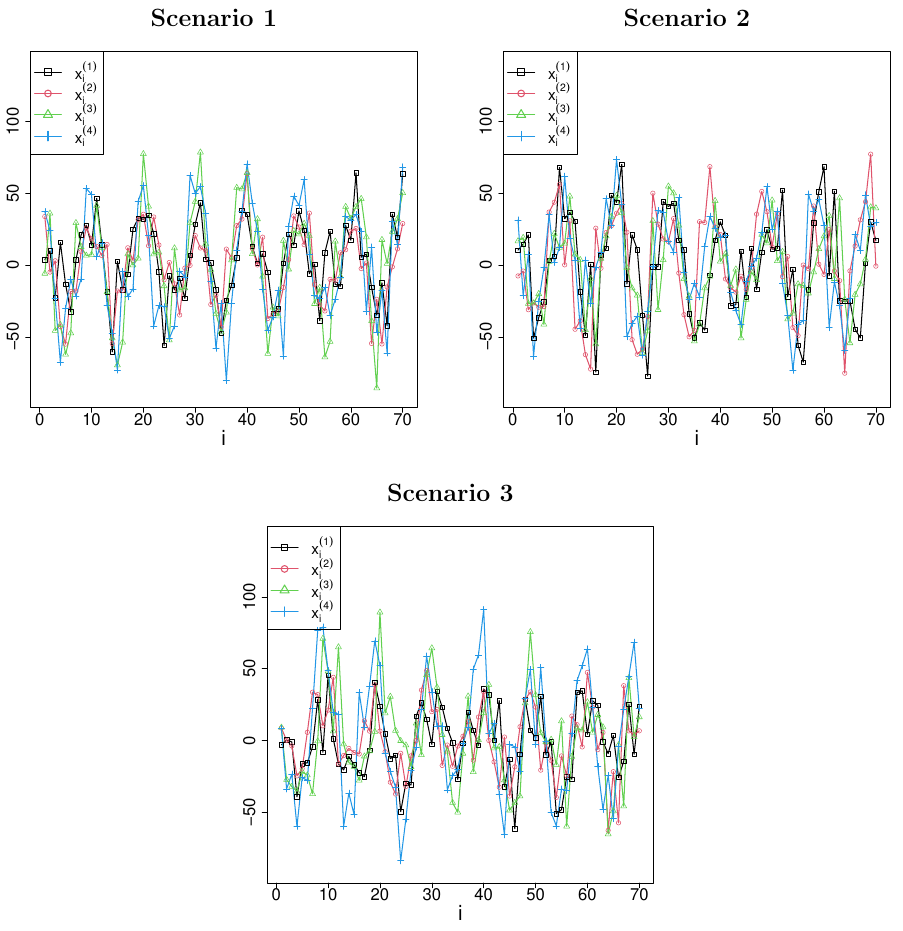}
\caption{(No outliers.) Examples of randomly generated 
multivariate time series under scenarios 1, 2, and 3 
of the simulation study. } \label{fig:dataic1}
\end{figure}
\clearpage

We first consider the window length $L=N/2=35$.
Figure~\ref{fig:results_cell_L2} shows the 
average RE and FE for each scenario (S1, S2 and
S3) and outlier fraction $\varepsilon$ as a 
function of $\gamma$, for data with cellwise 
contamination. 
Figure~\ref{fig:results_row_L2} does the same 
for casewise contamination. 

\begin{figure}[ht!]
\centering
\vskip0.8cm
\includegraphics[width=1.0\textwidth]
  {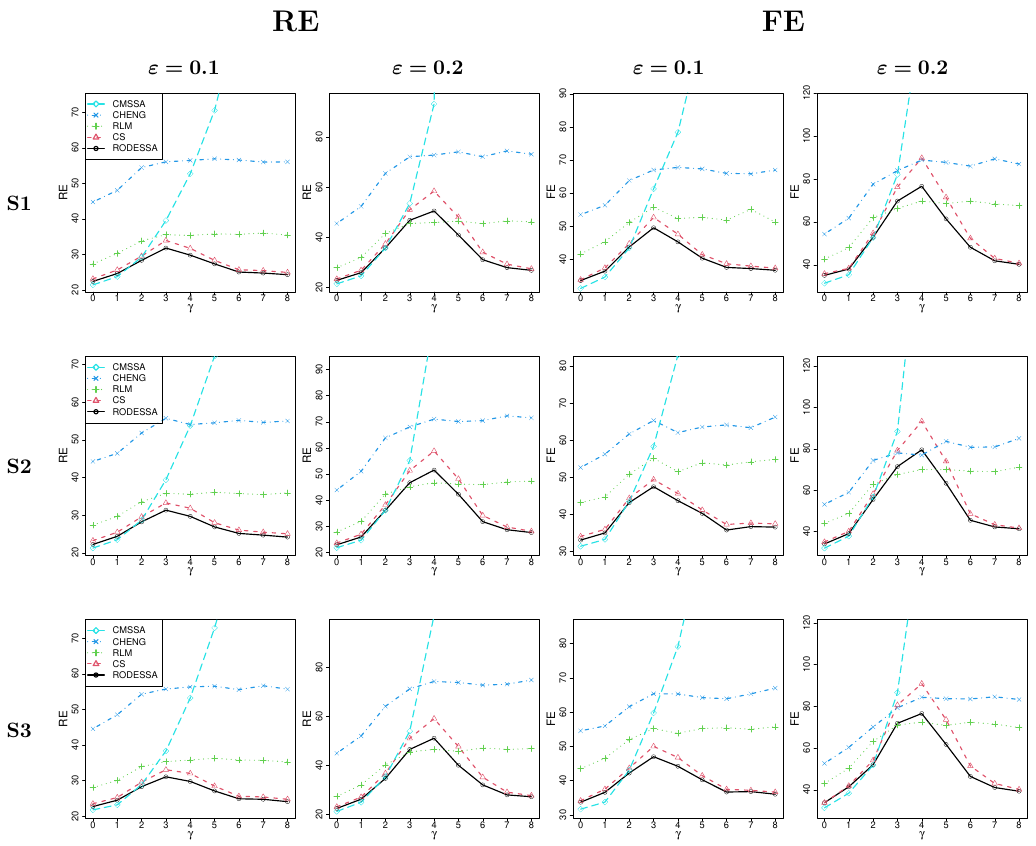}
\caption{(Window length $L=35$, cellwise outliers.) 
  Average RE and FE attained by CMSSA, CHENG, RLM, 
	CS, and  RODESSA for each scenario 
	(S1, S2, and S3) and contamination probability 
	($\varepsilon=0.1,0.2$) in function of $\gamma$.}
\label{fig:results_cell_L2}
\end{figure}

\begin{figure}[ht!]
\centering
\vskip-0.4cm
\includegraphics[width=1.0\textwidth]
  {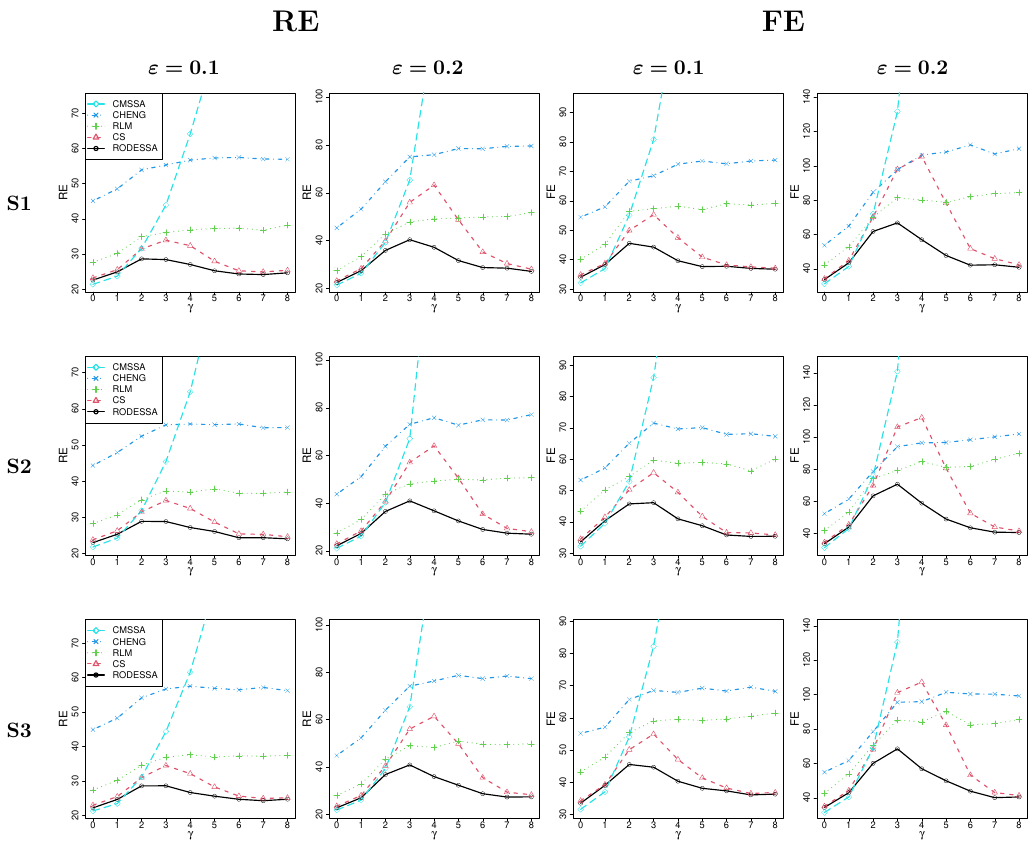}
\caption{(Window length $L=35$, casewise outliers.) 
  Average RE and FE attained by CMSSA, CHENG, RLM,
	CS, and  RODESSA  for each scenario 
	(S1, S2, and S3) and contamination probability 
	($\varepsilon=0.1,0.2$) in function of $\gamma$.}
\label{fig:results_row_L2}
\end{figure}

\clearpage
Figures~\ref{fig:results_cell_L1} 
and~\ref{fig:results_row_L1} show the corresponding
results for window length $L = 56 \sim pN/(p+1)$.

We conclude that all these simulation results are 
qualitatively similar to those reported in 
Section~\ref{sec:sim} of the paper.

\begin{figure}[ht!]
\centering
\vskip0.8cm
\includegraphics[width=1.0\textwidth]
  {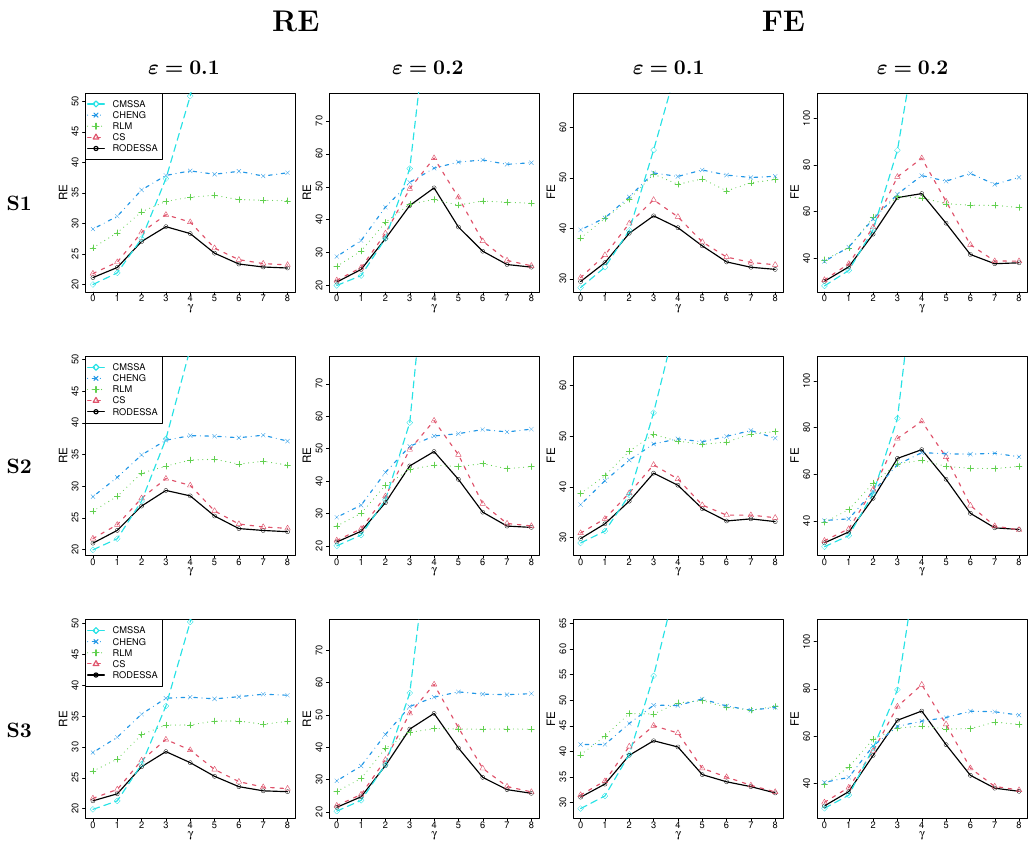}
\caption{(Window length $L=56$, cellwise outliers.) 
  Average RE and FE attained by CMSSA, CHENG, RLM, 
	CS, and RODESSA for each scenario 
	(S1, S2, and S3) and contamination probability 
	($\varepsilon=0.1,0.2$) in function of $\gamma$.}
\label{fig:results_cell_L1}
\end{figure}

\begin{figure}[ht!]
\centering
\vskip-0.4cm
\includegraphics[width=1.0\textwidth]
  {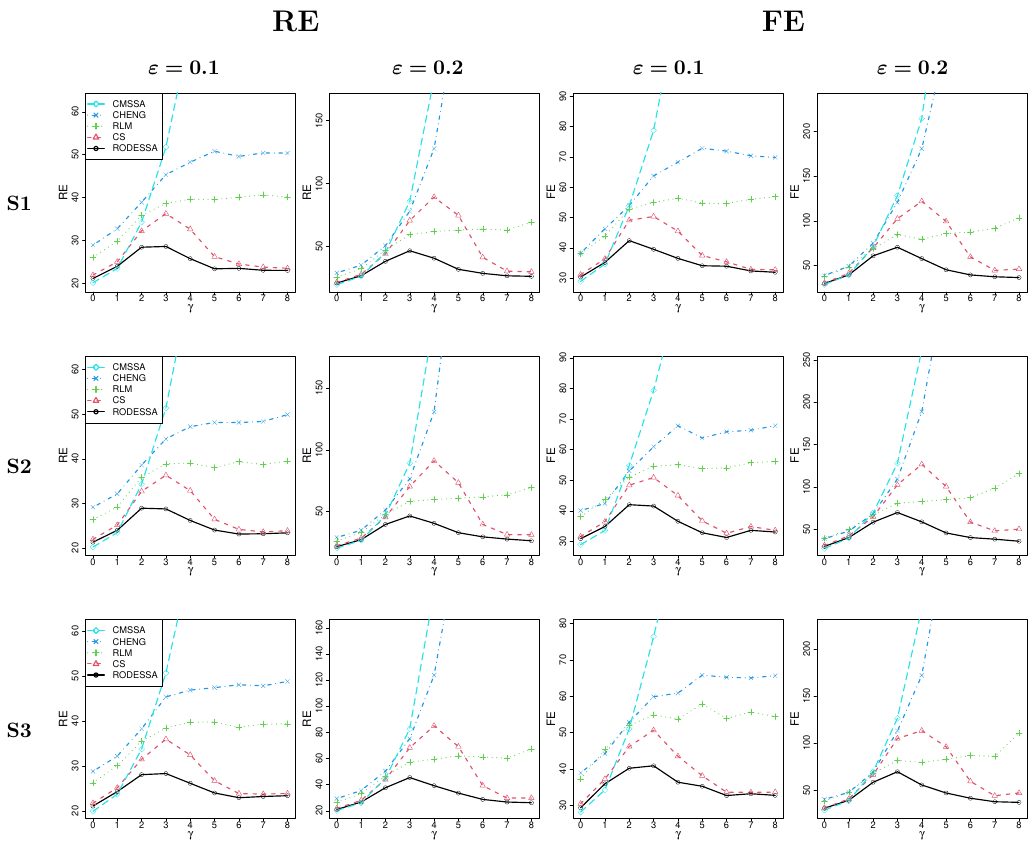}
\caption{(Window length $L=56$, casewise outliers.) 
  Average RE and FE attained by CMSSA, CHENG, RLM, 
	CS, and  RODESSA  for each scenario 
	(S1, S2, and S3) and contamination probability 
	($\varepsilon=0.1,0.2$) in function of $\gamma$.}
\label{fig:results_row_L1}
\end{figure}

\end{document}